\newtheorem{theorem}{Theorem}
\newtheorem{corollary}[theorem]{Corollary}
\newtheorem{definition}{Definition}
\newtheorem{lemma}{Lemma}
\newtheorem{remark}{Remark}
\newcommand{\comment}[1]{}
\newcommand{\Norm}[1]{\left\lVert#1\right\rVert}
\newcommand{\norm}[1]{\lVert#1\rVert}
\newcommand{\idty}[1]{\mathbb{1}}
\newcommand{\ovsqrt}[1]{\frac{1}{\sqrt{2}}}
\newcommand{\tr}[1]{\mathrm{Tr}}
\newcommand{\Ord}[1]{O(#1)}
\newcommand{\N}{\mathbb{N}}
\newcommand{\CC}{\mathbb{C}}
\newcommand{\Nystrom}{Nystr\"om}
\renewcommand{\tr}[1]{\mathrm{Tr}\left( #1 \right)}
\renewcommand{\dag}{+}
\begin{document}

\title{Approximating Hamiltonian dynamics with the $\quad$ {\Nystrom} method}

\author{Alessandro Rudi}
\email{alessandro.rudi@inria.fr}
\thanks{corresponding author}
\affiliation{INRIA - Sierra project team, Paris, France}

\author{Leonard Wossnig}
\email{l.wossnig@cs.ucl.ac.uk}
\affiliation{Department of Computer Science, University College London, London, United Kingdom and Rahko Ltd., London, United Kingdom}

\author{Carlo Ciliberto}
\email{c.ciliberto@ucl.ac.uk}
\affiliation{Department of Computer Science, University College London, London, United Kingdom}

\author{Andrea Rocchetto}
\email{andrea@cs.utexas.edu}
\affiliation{Department of Computer Science, University of Texas at Austin, Austin, United States, Department of Computer Science, University of Oxford, Oxford, United Kingdom, and Department of Computer Science, University College London, London, United Kingdom}

\author{Massimiliano Pontil}
\email{massimiliano.pontil@iit.it}
\affiliation{Computational Statistics and Machine Learning, IIT, Genoa, Italy}

\author{Simone Severini}
\email{s.severini@ucl.ac.uk}
\affiliation{Department of Computer Science, University College London, London, United Kingdom}

\maketitle

\begin{abstract}
  Simulating the time-evolution of quantum mechanical systems is BQP-hard and expected to be one of the foremost applications of quantum computers. We consider classical algorithms for the approximation of Hamiltonian dynamics using subsampling methods from randomized numerical linear algebra. We derive a simulation technique whose runtime scales polynomially in the number of qubits and the Frobenius norm of the Hamiltonian. As an immediate application, we show that sample based quantum simulation, a type of evolution where the Hamiltonian is a density matrix, can be efficiently classically simulated under specific structural conditions. Our main technical contribution is a randomized algorithm for approximating Hermitian matrix exponentials. The proof leverages a low-rank, symmetric approximation via the \Nystrom{} method. Our results suggest that under strong sampling assumptions there exist classical poly-logarithmic time simulations of quantum computations.

\end{abstract}

\section{Introduction}

Special purpose quantum simulators permit the efficient implementation of unitary dynamics governed by physically meaningful families of Hamiltonians, while the general task is BQP-hard -- since we can implement any quantum computation by a sequence of Hamiltonian evolutions. The properties responsible for efficiency correspond to plausible structural restrictions such as locality of interaction between subsystems~\cite{Lloyd1073}, or sparsity, which permits simulation complexities sub-logarithmic in the inverse error~\cite{aharonov2003adiabatic}. Moreover, sparse operators provide examples of a class of quantum circuits with efficient \textit{weak classical simulation}~\cite{schwarz2013simulating}, whereas general strong simulation is known to be \#P -hard~\cite{V08}. 
%-- being equivalent to exact counting the number of satisfying assignments of 3XOR formulas~\cite{}. \\
The intuition supporting this widely used empirical phenomenon is based on the fact that sparse matrices have good storage requirements and an \emph{easier} combinatorial structure mirroring the small number of physical interactions. 

In this paper, we consider the problem of classical simulation of quantum Hamiltonian dynamics. In particular, we will be interested in the case of time-independent Hamiltonians. In this setting, the problem is mathematically equivalent to the task of approximating the matrix exponential of an Hermitian matrix, \textit{i.e.} the Hamiltonian of the system.

The task is particularly relevant for quantum many-body physics where the exponential scaling of the wave function limits to only a handful of qubits the applicability of ordinary differential equation solvers. In order to simulate bigger system it is thus necessary to leverage specific properties of the system. An important class of methods exploits the entanglement structure of the states to obtain efficient representations, known as tensor networks~\cite{verstraete2008matrix,orus2014practical}, that can then be evolved either via Trotterisation~\cite{vidal2004efficient} or with a time-dependent variational principle~\cite{verstraete2004matrix, haegeman2011time}.

We conclude this introduction by noting that,
from a computational perspective,
the problem of simulating Hamiltonian dynamics is related, but different, to the problem of simulating quantum circuits. In the latter case the task no longer involves the computation of costly matrix exponentials and reduces to approximating the action of a circuit, described as a sequence of unitary operations, on a state vector. Recent techniques developed for this problem involve the notion of stabilizer rank~\cite{bravyi2018simulation} or neural network quantum states~\cite{jonsson2018neural}.

\subsection{Results}

We give a classical, randomized algorithm for the strong simulation of quantum Hamiltonian dynamics.
Recall that in a strong quantum simulation one computes the amplitude of a particular outcome, whereas in a weak simulation one only samples from the output distribution of a quantum circuit. While the first is known to have unconditional and exponential lower bounds~\cite{huang2018explicit}, and it is therefore hard for both classical and quantum computers, the latter can be performed efficiently by a quantum computer (for circuits of polynomial size).
In this sense, our approach is tackling a problem that is harder than the one solved by quantum computers (which can only sample from the probability distribution induced by  
measurements on the output state).
Surprisingly, we are still able to find cases in which the time evolution can be simulated efficiently. 

More specifically, given an efficient classical description of a Hamiltonian $H$ and of an $n=\log_2 N$-qubit quantum state $\psi$, we propose conditions under which the evolution can be simulated efficiently.
The algorithm takes as input an index $i \in \{1,\dots, 2^n \}$ and returns the complex amplitude corresponding to the projection of the state vector in the $i$-th element of the computational basis. 

Our algorithm relies on an input model that is commonly used in quantum simulation (condition 1 and 2) supplemented by a condition that is frequently encountered in the randomised numerical linear algebra literature (condition 3). More specifically:

\begin{enumerate}
  \item We require $H$ to be {\em row-computable}. This condition states that every row of $H$ must have at most a number $s = O(\polylog(N))$ of non-zero entries. Furthermore, there exists a classical efficient algorithm that, given a row-index $i$, outputs a list of the non-zero entries of the row.
  \item We require the initial state $\psi$ to have have at most $q=O(\polylog(N))$ non-zero entries. Furthermore, there exists a classical efficient algorithm that outputs a list of the non-zero entries.
  \item  We require $H$ to be efficiently {\em row-searchable}. This condition informally states that we can efficiently sample randomly selected indices of the rows of $H$ in a way proportional to the norm of the row (general case) or the diagonal element of the row (positive semidefinite case). 

\end{enumerate}

It is important to remark that, although the notion of row-searchability is commonly assumed to hold in the randomised numerical linear algebra literature (see for example Section $4$ in the seminal paper by Frieze, Kannan, and Vempala~\cite{frieze2004fast}), in the context of quantum systems this is a much stronger requirement and cannot be assumed to hold for every matrix. Indeed, whilst for a polynomially sized matrix it is always possible to evaluate all the row-norms efficiently in the number of non-zero entries of the matrix, this is generally no longer the case for the exponentially sized matrices found in quantum mechanics.

Under this input model we prove the following theorem that here we state informally.
\begin{theorem}[Informal statement of the main result]
\label{th:maininformal}
If $H$ is a row-computable, row-searchable Hamiltonian on $n$ qubits, and if $\psi$ is an $n$-qubit quantum state with an efficient classical description then, there exists an algorithm that, with probability $1-\delta$, approximates any chosen amplitude of the state $e^{i H t} \psi$ in time 
$$O\left(sq + \frac{t^9\norm{H}^4_F \norm{H}^7}{\epsilon^4}\left(n + \log\frac{1}{\delta}\right)^2\right),$$
where $\epsilon$ determines the quality of the approximation, $\norm{\cdot}_F$ is the Frobenius norm, $\norm{\cdot}$ is the spectral norm, $s$ is the maximum number of non-zero elements in the rows of $H$, and $q$ is the number of non-zero elements in $\psi$.
\end{theorem}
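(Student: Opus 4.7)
The plan is to combine a truncated Taylor expansion of the matrix exponential with a Nyström low-rank approximation of $H$. First I would fix $K = O(t\norm{H} + \log(1/\epsilon))$ so that the truncation $T_K(iHt) = \sum_{k=0}^{K} (iHt)^k/k!$ satisfies $\norm{e^{iHt} - T_K(iHt)} \le \epsilon/3$ by the standard factorial tail bound. This reduces the problem to evaluating $\bra{i} T_K(i\tilde H t) \ket{\psi}$ for a suitable low-rank surrogate $\tilde H$.

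Second, I would use the row-searchability assumption (condition 3) to draw $r$ row/column indices of $H$ with probabilities proportional to squared row norms (or to diagonal entries in the PSD case), form the sampled block $C$ of columns and the $r\times r$ intersection matrix $W$, and define $\tilde H = CW^\dag C^\dag$. A classical Nyström analysis then yields, with probability at least $1-\delta$, a spectral-norm bound of the form $\norm{H-\tilde H} \le \eta(r,\norm{H}_F,\log(1/\delta))$ where $\eta$ shrinks polynomially in $r$. I would pick $r$ large enough so that the Taylor approximation of $e^{i\tilde H t}\psi$ differs from $e^{iHt}\psi$ by at most $\epsilon/3$, using the telescoping identity
\[
H^k - \tilde H^k = \sum_{j=0}^{k-1} H^{k-1-j}(H-\tilde H)\tilde H^j
\]
to propagate the Nyström error through each Taylor term and summing over $k \le K$.

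Third, since $\tilde H$ has rank at most $r$, I would diagonalise $W$ in $\tOrd{r^3}$ time to obtain $\tilde H = V\Lambda V^\dag$, and then express the amplitude as
\[
\bra{i} e^{i\tilde H t}\ket{\psi} = \sum_{k=1}^{r} V_{ik}\, e^{i\lambda_k t}\, (V^\dag \psi)_k.
\]
By row-computability, each row of $H$ needed to assemble an entry $V_{ik}$ can be queried in $\tOrd{s}$ time, and by the sparsity of $\psi$, each inner product $(V^\dag\psi)_k$ costs $\tOrd{q}$. Combining these per-amplitude costs with the sample complexity derived above produces the runtime stated in the theorem.

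The hard part is the quantitative Nyström-to-exponential error transfer in the second step. The naive Duhamel estimate $\norm{e^{iHt}\psi - e^{i\tilde H t}\psi} \le t\norm{H-\tilde H}$ discards the Frobenius-norm structure that Nyström concentration naturally produces and gives a suboptimal dependence on $t$. Instead, one must carefully pair the Nyström concentration (sharpest in Frobenius norm and in expected squared error) with the polynomial structure of $T_K$, and then take a union bound over $K$ events to preserve the overall failure probability $\delta$. It is this accounting, together with the high degree $K = O(t\norm{H})$ of the Taylor polynomial through which Nyström error must be propagated, that yields the polynomial factors $t^9\norm{H}_F^4\norm{H}^7/\epsilon^4$ in the runtime.
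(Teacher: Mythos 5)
Your overall architecture (sample rows via row-searchability, build a low-rank surrogate, truncate the Taylor series) matches the paper's, but two of your central steps fail for general Hermitian $H$, which is the case this theorem addresses. First, the Nystr\"om approximation $\tilde H = CW^{\dag}C^{\dag}$ with a spectral-norm guarantee in terms of $\norm{H}_F$ is only valid for positive semidefinite matrices: the paper's own analysis of that estimator (Lemma~\ref{lemma:nystrom-analytic-bound}) writes $H = SS^*$ and shows $\widehat H = S\widehat P S^*$ for an orthogonal projection $\widehat P$, which is exactly what breaks when $H$ is indefinite ($W$ may be nearly singular in directions where $C$ is not, and $CW^{\dag}C^{\dag}$ admits no useful concentration bound). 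The paper's actual route for general $H$ is different: it never approximates $H$ itself, but uses the identity $e^{ix} = 1 + ix + f(x^2)x^2 + i\,g(x^2)x^3$ with $f(x)=(\cos\sqrt{x}-1)/x$ and $g(x)=(\sin\sqrt{x}-\sqrt{x})/(x\sqrt{x})$, so that the only quantities needed are $H\psi$ (computed exactly via row-computability, the $O(sq)$ term) and functions of $H^2$; then $H^2$ is approximated by the unbiased estimator $AA^*$ built from importance-sampled, rescaled rows, for which a Bernstein-type bound gives $\norm{AA^*-H^2}$ small at the claimed $\norm{H}_F^2\norm{H}^2/M$ rate. This is the key idea your proposal is missing.

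Second, your error-propagation step is quantitatively wrong in a way that destroys the polynomial runtime. Telescoping $H^k-\tilde H^k$ through the degree-$K$ Taylor polynomial and summing gives a factor of order $\sum_{k\le K} k\,t^k\norm{H}^{k-1}/k! \approx t\,e^{t\norm{H}}$, so you would need $\norm{H-\tilde H}\lesssim \epsilon e^{-t\norm{H}}/t$, i.e.\ exponentially many samples in $t\norm{H}$. The bound you dismiss as ``naive Duhamel'' is in fact the sharp one and is essentially what the paper uses: for Hermitian $H,\widehat H$ one has $\norm{e^{iHt}-e^{i\widehat Ht}}\le t\norm{H-\widehat H}$ by unitarity of the propagators (and, in the general Hermitian case, the analogous operator-Lipschitz bound of Aleksandrov--Peller applied to the bounded entire functions $\cos\sqrt{x}-1$ and $(\sin\sqrt{x}-\sqrt{x})/\sqrt{x}$, yielding $2t^2(1+t\norm{H})\norm{AA^*-H^2}$). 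The truncation error is then bounded separately against $e^{i\widehat Ht}$, so no union bound over $K$ events is needed --- there is a single random event, the draw of the $M$ indices. Your final diagonalisation step is a legitimate alternative to the paper's iterated matrix-vector products, but it does not repair the two issues above.
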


By analyzing the dependency of the runtime on the Frobenius norm we can determine under which conditions we can obtain efficient Hamiltonian simulations. Informally, we obtain:
\begin{corollary}[Informal]
\label{co:maininformal}
If $H$ is a row-computable, row-searchable Hamiltonian on $n$ qubits such that $\norm{H}^2_F - \frac{1}{N}\tr{H}^2 \leq O(\polylog(N))$, and if $\psi$ is an $n$-qubit quantum state with an efficient classical description, then 
there exists an efficient algorithm that approximates any chosen amplitude of the state $e^{i H t} \psi$.
\end{corollary}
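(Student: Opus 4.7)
\textbf{Proof proposal for Corollary~\ref{co:maininformal}.} The plan is to reduce the problem to the hypotheses of Theorem~\ref{th:maininformal} by eliminating the trace component of $H$. Let $c = \tfrac{1}{N}\tr{H}$ and set $\widetilde H = H - c \idty{}$. Since $c\idty{}$ commutes with $\widetilde H$, we have $e^{iHt}\psi = e^{ict}\, e^{i\widetilde H t}\psi$, so the $i$-th amplitude of $e^{iHt}\psi$ differs from that of $e^{i\widetilde H t}\psi$ only by the scalar factor $e^{ict}$. Hence it suffices to approximate the amplitudes of $e^{i\widetilde H t}\psi$ using Theorem~\ref{th:maininformal} and then multiply by the scalar.

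The first step is the Frobenius-norm bookkeeping. A direct calculation gives
\begin{equation*}
\Norm{\widetilde H}_F^2 \;=\; \tr{(H-c\idty{})^2} \;=\; \Norm{H}_F^2 - 2c\,\tr{H} + Nc^2 \;=\; \Norm{H}_F^2 - \tfrac{1}{N}\tr{H}^2,
\end{equation*}
which is $O(\polylog(N))$ by the hypothesis of the corollary. Because spectral norm is dominated by Frobenius norm, we also get $\Norm{\widetilde H} \leq \Norm{\widetilde H}_F = O(\polylog(N))$. Substituting these polylogarithmic bounds into the runtime estimate of Theorem~\ref{th:maininformal} reduces every factor in $\Norm{H}_F$ and $\Norm{H}$ to $\polylog(N)$, so the overall cost is polynomial in $n=\log_2 N$ (and in $t$, $1/\epsilon$, and $\log(1/\delta)$).

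The second step is to verify that $\widetilde H$ still satisfies the input-model requirements (1)--(3) of the theorem. Row-computability is immediate: $\widetilde H$ differs from $H$ only on the diagonal, so the list of non-zero entries of row $i$ is obtained by subtracting $c$ from the returned diagonal element, without changing the sparsity $s$. Row-searchability is similarly preserved: the row norms of $\widetilde H$ differ from those of $H$ only through the single diagonal entry of each row, which row-computability makes explicitly available, so the sampling oracle can be corrected entry-by-entry (and in the PSD formulation, one shifts so that $\widetilde H \succeq 0$ and uses the diagonal directly). The state $\psi$ is untouched, so $q=O(\polylog(N))$ still holds.

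The main obstacle I foresee is the estimation of the scalar $c=\tfrac{1}{N}\tr{H}$, since the trace is a sum over exponentially many diagonal entries. I would handle this by stochastic trace estimation: draw uniformly random indices $i$ and use row-computability to evaluate $H_{ii}$, each costing $\polylog(N)$ time. Since the diagonal entries are bounded by $\Norm{H}$, a Hoeffding-type bound shows that $O(t^2\Norm{H}^2/\epsilon^2)$ samples suffice to estimate $c$ to precision $\epsilon/t$, which in turn controls the error $|e^{ict} - e^{i\widehat c t}| \lesssim \epsilon$ in the global phase factor. Combining this with the Theorem~\ref{th:maininformal} amplitude estimate for $e^{i\widetilde H t}\psi$ (at target accuracy $\epsilon/2$) and a triangle inequality yields an overall $\epsilon$-approximation in time $\polylog(N)\cdot\poly(t,1/\epsilon,\log(1/\delta))$, which is the efficient algorithm claimed by the corollary.
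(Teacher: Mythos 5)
Your proposal follows essentially the same route as the paper: shift by $\alpha I$ with $\alpha = \tr{H}/N$ (the Frobenius-norm-minimizing choice), observe that $\norm{H-\alpha I}_F^2 = \norm{H}_F^2 - \tfrac{1}{N}\tr{H}^2$, and invoke Theorem~\ref{th:maininformal} for the shifted Hamiltonian, recovering the evolution up to the global phase $e^{i\alpha t}$. You additionally spell out two points the paper leaves implicit---that the input-model conditions must transfer to $H-\alpha I$ and that $\tr{H}/N$ must itself be estimated---so the argument matches the paper's and is, if anything, slightly more complete.
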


The algorithm proceeds by performing a two steps approximation. First, the Hamiltonian $H$ is approximated in terms of a low rank operator $\widehat{H}$ which is more amenable to computations by sampling the rows with a probability proportional to the row norm. Second, the time evolution $e^{i\widehat{H} t}\psi$ is approximated by a truncated Taylor expansion of the matrix exponential. By combining the structure of $\widehat{H}$ and the spectral properties of the truncated exponential we are able to guarantee that the above procedure can be efficiently performed.

In a more specific way, we separately consider the case of a generic Hamiltonian $H$ and the restricted case of positive semidefinite Hamiltonians, for which a refined analysis is possible. The proposed algorithm leverages on a low-rank approximation of the Hamiltonian $H$ to efficiently approximate the matrix exponential $e^{i Ht}$. Such approximation is performed by randomly sampling $M = O(\polylog(N))$ rows according to the distribution determined by the row-searchability condition and then collating them in a matrix $A\in \mathbb C^{M \times N}$.  

When $H$ is positive semidefinite, we consider the approximation $\widehat{H}= AB^+ A^*$, where $B^+$ is the pseudoinverse of the positive semidefinite matrix $B\in\mathbb C^{M \times M}$ obtained by selecting the rows of $A$ whose indices correspond to those originally sampled for the rows of $H$. The approximation of the time evolution $e^{i\widehat{H}t}\psi$ is then performed in terms of a Taylor expansion of the matrix exponential function, truncated to the $K$-th order. Leveraging the structure of $\widehat{H}$ it is possible to formulate the truncated expansion only in terms of linear operations involving the matrices $A^*A$, $B^+$ and $B$ and the vector $A^*\psi$. Under the row-computability assumption, all these operations can be performed efficiently. 

In the general Hermitian setting, rows of $H$, that are sampled to form $A$, are first rescaled according to their sampling probability. Differently from the positive semidefinite case, we resort to a slightly more involved variant of the approximation of the matrix exponential, in particular we use $\widehat{H}^2 := AA^*$ to approximate $H^2$. This approach involves two auxiliary functions in which the matrix exponential is decomposed and for each of these two functions we evaluate its truncated Taylor expansion. We show that this approximation can be formulated only in terms of linear operations involving $A^*A$ and $A^*\psi$. Again, under the row-computability assumption, these operations can be performed efficiently. 

\subsection{Applications and outlook}

It remains an open question---and an interesting research direction---to determine if there exist physically relevant families of Hamiltonians that respect the requirements for efficient simulation outlined in this paper.
In general terms, our methods appear to function well in cases of ``small variations'' (for example, if all the eigenvalues of the Hamiltonian are contained in an exponentially small band) and in this sense share some similarities with perturbation theory.

As a specific application of our theorem, we propose the case of \textit{sample based Hamiltonian simulation}, that is the simulation of quantum dynamics where the Hamiltonian is a density matrix~\cite{lloyd2014quantum}. This type of simulation has recently found applications in various quantum algorithms for machine learning tasks such as linear regression~\cite{schuld2016prediction} and least squares support vector machines~\cite{rebentrost2014quantum}. Note that when these algorithms are used to analyze classical data, they assume that the data can be efficiently encoded into a density matrix.
Specifically, as a direct consequence of our main theorem, we obtain the following result:
\begin{theorem}[Informal]
\label{th:sampleinformal}
If $\rho$ is a row-computable, row-searchable density matrix and if $\psi$ is an $n$-qubit quantum state with an efficient classical description, then 
there exists an efficient algorithm that approximates any chosen amplitude of the state $e^{i \rho t} \psi$.
\end{theorem}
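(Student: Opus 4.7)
The plan is to derive this theorem as a direct corollary of Theorem~\ref{th:maininformal} (and its companion Corollary~\ref{co:maininformal}) by instantiating $H := \rho$ and verifying that the structural hypotheses on the norms required there are automatically satisfied for any density matrix. Since $\rho$ is assumed to be row-computable and row-searchable, conditions (1) and (3) of the input model are met by hypothesis, and the efficient classical description of $\psi$ supplies condition (2). The only thing left to check is that substituting $\rho$ for $H$ in the runtime bound of Theorem~\ref{th:maininformal} yields a quantity that is polylogarithmic in $N=2^n$.

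First, I would exploit that a density matrix is positive semidefinite with $\mathrm{Tr}(\rho)=1$ and spectrum contained in $[0,1]$. This gives immediately $\norm{\rho}\le 1$ for the spectral norm, and
\begin{equation*}
\norm{\rho}_F^2 \;=\; \sum_i \lambda_i(\rho)^2 \;\le\; \sum_i \lambda_i(\rho) \;=\; \mathrm{Tr}(\rho) \;=\; 1,
\end{equation*}
since each eigenvalue satisfies $\lambda_i^2\le\lambda_i$. In particular, the condition appearing in Corollary~\ref{co:maininformal}, namely $\norm{\rho}_F^2 - \tfrac{1}{N}\mathrm{Tr}(\rho)^2 \le \mathcal{O}(\mathrm{polylog}(N))$, holds trivially because the left-hand side is bounded above by~$1$.

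Second, I would plug these bounds into the runtime stated in Theorem~\ref{th:maininformal}. With $\norm{\rho}_F^4\le 1$ and $\norm{\rho}^7\le 1$, the algorithm runs in time
\begin{equation*}
\mathcal{O}\!\left(sq + \frac{t^{9}}{\epsilon^{4}}\left(n + \log\tfrac{1}{\delta}\right)^{\!2}\right),
\end{equation*}
which is polylogarithmic in $N$ whenever $s$, $q$, $t$, and $1/\epsilon$ are polylogarithmic in $N$ (and $\log(1/\delta)$ is polynomial in $n$). Furthermore, because $\rho\succeq 0$, we can invoke the sharper PSD variant of the underlying construction (the $\widehat{H}=AB^{+}A^{*}$ \Nystrom{} approximation sampled with probabilities proportional to the diagonal entries $\rho_{ii}$, which are precisely the basis-state occupation probabilities of $\rho$), rather than the more involved general Hermitian decomposition.

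I do not expect any genuine obstacle: the theorem is essentially a specialization of the main result, and the only subtlety is recognising that the defining properties of a density matrix ($\rho\succeq 0$ and unit trace) automatically control both the spectral norm and the Frobenius norm by constants independent of $N$, so that every factor in the runtime of Theorem~\ref{th:maininformal} remains benign. The conclusion then follows by reading off the statement of Corollary~\ref{co:maininformal} with $H=\rho$.
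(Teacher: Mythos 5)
Your proposal is correct in substance, but your primary route differs from the paper's. The paper proves the formal version (Theorem~\ref{thm:sample_based_Ham_sim}) as an immediate consequence of the \emph{PSD} theorem, Theorem~\ref{thm:main_psd}: since $\tr{\rho}=1$ (and hence $\norm{\rho}\leq 1$), the bounds on $M$ and $K$ there become constants in $N$, giving runtime $\Ord{\frac{st^2}{\epsilon^2}\log^2\frac{1}{\delta}}$. You instead instantiate the general Hermitian result, Theorem~\ref{th:maininformal} and Corollary~\ref{co:maininformal}, using $\norm{\rho}_F^2=\sum_i\lambda_i^2\leq\tr{\rho}=1$; that bound is correct and the condition of Corollary~\ref{co:maininformal} is indeed trivially satisfied, but it yields the weaker runtime $\Ord{sq + \frac{t^9}{\epsilon^4}(n+\log\frac{1}{\delta})^2}$. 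There is one subtlety you should make explicit: ``row-searchable'' in Definition~\ref{cond:row-search} means different things in the two regimes --- for PSD matrices it is the ability to compute marginals of the \emph{diagonal} entries ($h(i,H_{:,i})=H_{ii}$), whereas the general Hermitian algorithm needs marginals of the squared row norms ($h(i,H_{:,i})=\norm{H_{:,i}}^2$). A density matrix assumed row-searchable in the PSD sense does not automatically admit the $\ell_2$-norm sampling required by Theorem~\ref{th:maininformal}, so your invocation of the general theorem as the main vehicle is not quite licensed by the hypotheses; your closing remark that one should ``invoke the sharper PSD variant'' is in fact the necessary step, and it is exactly what the paper does. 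With that route taken as primary, your argument coincides with the paper's and additionally gives the better constants.
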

We remark that all our results do not require the Hamiltonian to be sparse (that is, to have at most $\polylog (N)$ non-zero entries) but only to be row-sparse (that is, to have at most $\polylog (N)$ non-zero entries \emph{per row}). The latter requirement is compatible with matrices that are non-sparse.

Going beyond immediate applications, we believe that the introduction of randomised numerical algebra techniques in quantum mechanics may provide a new direction from which to tackle the quantum many-body problem.
In this spirit, it is relevant to mention that shortly after our preprint was posted on arXiv, Tang~\cite{tang2018quantum} showed that, using a strong sampling data structure---and with a significantly higher polynomial overhead---it is possible to derive a classical poly-logarithmic time algorithm for recommendation problems that nearly matches the asymptotic scaling of the quantum algorithm proposed by Kerenidis and Prakash~\cite{kerenidis2016quantum}. The work by Tang assumes the ability to sample from the probability distribution defined by the squared entries of a matrix divided by the $\ell_2$ norm of the matrix, the so called $\ell_2$-norm sampling. Our row-searchable condition is fundamentally equivalent to this and both our results provide evidence that a careful assessment of the state preparation conditions is fundamental in order to determine whether a quantum algorithm provides a true advantage over a classical variant.

More specifically, \cite{tang2018quantum} defines the sample access to the input in the following way;
\begin{definition}[\cite{tang2018quantum}]
\label{def:Tang}
  We have sample access to data point $x \in \mathbb{C}^N$ if, given an index $i \in [N]$, we can produce independent random samples $i \in [N]$ where $i$ is sampled with probability $|x_i|^2/\norm{x}^2$.
\end{definition}
This definition can be easily extended to the columns (or rows) $H_{:,i}$ ($H_{i,:}$) of a Hamiltonian $H$.
In this case, we sample a column with probability $p(i)=\norm{H_{:,i}}^2/\norm{H}_F^2$ (similarly for a row). 
The sampling scheme we use for Hermitian matrices is equivalent to this one (when the matrix is also PSD we use a more computationally efficient variant). 
Additionally, we also provide a method based on binary-trees to compute the sampling probabilities if these are not given by an oracle but stored in a standard memory model.

Informally, we use the following sample access model;
\begin{definition}[Informal statement of the sampling access model]
\label{def:our}
  We have sample access to the Hermitian matrix $H \in \mathbb{C}^{N \times N}$ if, given an index $i \in [N]$, we can randomly sample independent row indices $i \in [N]$ according to probability $p(i)=\norm{H_{:,i}}^2/\norm{H}_F^2$ in time $O(\mathrm{poly}(\log(N)))$ for a polynomially sparse, or structured matrix $H$.
\end{definition}
Definition~\ref{def:Tang} and Definition~\ref{def:our} are fundamentally the same.
The main difference is that, while we use a traditional memory structure and provide a fast way to calculate the marginals using a binary tree (see Sec.~\ref{sec:sampling}), \cite{tang2018quantum} assumes a memory structure which allows one to sample efficiently according to this distribution.

Finally, it is relevant to mention that the techniques introduced by Tang have been directly applied to the problem of classical Hamiltonian simulation in~\cite{chia2019sampling}.
While the time complexity of this algorithm has a $36$-th power dependency on the Frobenius norm of the Hamiltonian, our algorithm scales with a $4$-th power.

\subsection{Related work}

\subsubsection{Classical and quantum approximation of matrix exponentials}

The problem of matrix exponentiation has been extensively studied in the linear algebra literature~\cite{higham2005scaling,higham2009scaling,al2009new,al2011computing}. Presently, for the case of general Hermitian matrices, there is no known algorithm with a runtime logarithmic in the dimension of the input matrix. Such fast scaling is required for the simulation of quantum dynamics where the dimensions of the matrix that describes the evolution scale exponentially with the number of qubits in the system.

The results described in this paper are based on 
randomized numerical linear algebra techniques. These methods, along with results from spectral graph theory, have lead to a variety of new classical algorithms to approximate matrix exponentials~\cite{drineas2006fast,drineas2011faster,mahoney2011randomized,woodruff2014sketching,rudi2015less}. For specific types of matrices, these techniques give efficient runtimes. For example, Orrecchia \textit{et al.}~\cite{orecchia2012approximating} have demonstrated that the spectral sparsifiers of Spielmann and Teng~\cite{spielman2004nearly,spielman2011spectral} can be used to approximate exponentials of strictly diagonally dominant matrices in time almost linear in the number of non-zero entries of $H$. 

Quantum computers on the other hand can approximate efficiently some kinds of matrix exponentials. In particular, there exist time-efficient quantum algorithms for simulating the dynamic of row-sparse Hamiltonians that have only a linear dependency in the row-sparsity. For an important class of algorithms the simulation exploits an efficient edge-coloring of the graph associated with the Hamiltonian matrix $H$~\cite{childs2003exponential,childs2010simulating}. Once this edge coloring is found, the Hamiltonian can be decomposed into a sum of sparse Hamiltonians assuming that $H$ is sparse. 
It is known that these terms can be simulated separately using the Trotter-Suzuki formula \cite{trotter1959product,suzuki1976generalized}.
Improved methods have been developed which result in algorithms with a runtime of $\tilde{O}(s\ \poly(\log(Ns/\epsilon)))$ where, $s$ is the sparsity, $N$ the dimension and $\epsilon$ the maximum error in the solution \cite{childs2012hamiltonian,berry2015hamiltonian,low2017hamiltonian}.

\subsubsection{Randomized numerical linear algebra and \Nystrom{} methods}

Randomized numerical linear algebra (RandNLA) seeks to solve large-scale linear algebra problems exploiting randomisation as a computational resource. Central is the notion of a sketch.
A \textit{sketch} is a smaller or sparser approximation of the original input instance, such as a matrix of data points, that is used to compute quantities of interest. The review of Drineas and Mahoney covers the main ideas and tools of RandNLA~\cite{drineas2018lectures}. 
As a simple example of the techniques used in RandNLA we present here the case of approximate matrix multiplication via random sampling. In general, given two $n \times n$ square matrices $A$ and $B$, computing $AB$ takes $O(n^3)$ time. Drineas, Kannan, and Mahoney showed that, given an efficient sampling method for the rows and columns of the matrices, it is possible to efficiently approximate $AB$ in time $O(c\, n^2)$, where $c$ is the number of rows and columns randomly sampled from the matrices~\cite{drineas2006fast}. The algorithm is structurally very simple. Draw $c$ random samples of columns of $A$ and rows of $B$ according to a probability distribution $p$. Group the samples into two, properly scaled, smaller matrices $C$ and $R$. When the probability distribution $p$ is chosen appropriately and the columns and rows accordingly re-scaled, it is possible to show that that $\norm{AB - CR}_F = O(\norm{A}_F \norm{B}_F /\sqrt{c})$.

Similar ideas can be used to compute low-rank approximations of matrices that can be then used in a wide range of applications~\cite{drineas2006fast2}.
In general, these methods produce sketches that do not preserve given symmetries of the matrix, a fundamental requirement for applications in quantum mechanics. 
A technique where the symmetry of the sketched matrix is preserved is the \Nystrom{} method, a RandNLA tool developed for the approximation of kernel matrices in statistical learning theory. Roughly speaking, the method allows one to construct a lower dimensional, symmetric, positive semidefinite approximation of a given matrix given a sampling schemes for its columns.
More specifically, let $K \in \mathbb{R}^{n \times n}$ be a symmetric, rank $r$, PSD matrix, $K_{:,j}$ the $j$-th column vector of $K$, and $K_{i,:}$ the $i$-th row vector of $K$.
The singular value decomposition of $K$ is $K = U \Sigma U^{\top}$, where the columns of $U$ are orthogonal and $\Sigma = \operatorname{diag} (\sigma_1, \dots, \sigma_r)$ is the matrix of the singular values $\sigma_1, \dots, \sigma_r$ of $K$.
The Moore-Penrose pseudoinverse of $K$ is $
  K^+ = \sum_{t=1} ^r \sigma_t ^{-1} U_{:,t} {U_{:,t}}^{\top}.$
The \Nystrom{} method finds a low-rank approximation of $K$ that preserves the symmetry and PSD property of the matrix.
Let $C$ denote the $n \times l$ matrix formed by (uniformly) sampling $l \ll n$ columns of $K$, $W$ denote the $l \times l$ matrix consisting of the intersection of these $l$ columns with the
corresponding $l$ rows of $K$, and $W_k$ denote the best rank-$k$ approximation of $W$ 
 \[
 W_k = \operatorname{argmin}_{V\in \mathbb{R}^{l \times l}, \operatorname{rank}(V) = k} \norm{V-W}_F.
 \]
The {\Nystrom} method generates a rank-$k$ approximation $\tilde{K}$ of $K$ for $k < n$ defined by:
$$
\tilde{K}_k = C W_k ^+ C^{\top} \approx K
$$
The running time of the algorithm is $O(nkl)$~\cite{kumar2012sampling}.

The performance of the algorithm can be improved using non-uniform sampling schemes.
For a sampling scheme equivalent to the one we use in this paper we have
\begin{theorem}[Theorem $3$ in~\cite{drineas2005nystrom}]
Let $K$ be a $n \times n$ symmetric PSD matrix, let $k < l$ be a rank parameter, and let $\tilde{K}$ be the approximation constructed using the {\Nystrom} method by sampling $l$ columns with probabilities $p_i = |K_{:,i} |^2 / \norm{K}_F ^2$. Let $\epsilon > 0 $ and $\eta = 1 + \sqrt{8\operatorname{log} (1/ \delta)}$, if $l > 64k \eta^2/\epsilon^4$ then, with probability at least $1-\delta$
  $$
  \Norm{K- \tilde{K}_k}_F \leq \Norm{K- K_k}_F + \epsilon \sum_{i=1} ^n K_{ii} ^2.
  $$  
\end{theorem}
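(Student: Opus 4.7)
The plan is to reduce the {\Nystrom} approximation error to an approximate matrix multiplication problem that can then be controlled by a standard concentration argument. Writing the PSD matrix as $K = YY^\top$ via its eigendecomposition (with $Y = U\Sigma^{1/2}$), I introduce the sampling matrix $S \in \mathbb{R}^{n\times l}$ whose $t$-th column is $e_{i_t}/\sqrt{l\,p_{i_t}}$, with $i_1,\dots,i_l$ drawn i.i.d.\ from $p$. In this notation $C = KS$, $W = S^\top K S = (Y^\top S)^\top (Y^\top S)$, and a short computation yields the convenient identity $\tilde{K} = CW^+C^\top = Y P Y^\top$, where $P$ is the orthogonal projection onto the column span of $Y^\top S$.

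The key observation is that under the sampling law $p_i = \Norm{K_{:,i}}^2/\Norm{K}_F^2$, the random product $CC^\top = KSS^\top K$ is an unbiased estimator of $K^2$ (the symmetry of $K$ is what makes column- and row-norm sampling coincide). The Drineas--Kannan--Mahoney matrix multiplication inequality then gives an in-expectation bound
$$
\mathbb{E}\Norm{K^2 - CC^\top}_F \le \frac{1}{\sqrt{l}}\Norm{K}_F^2,
$$
and a bounded-differences (McDiarmid) argument upgrades this to the high-probability statement $\Norm{K^2 - CC^\top}_F \le \eta \Norm{K}_F^2/\sqrt{l}$ holding with probability at least $1-\delta$, for $\eta = 1 + \sqrt{8\log(1/\delta)}$.

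The main structural step is the inequality
$$
\Norm{K - \tilde{K}_k}_F \le \Norm{K - K_k}_F + \sqrt{2k}\,\Norm{K^2 - CC^\top}_F^{1/2},
$$
which couples the rank-$k$ truncated {\Nystrom} error to the matrix multiplication error. I would establish it by expanding $K - \tilde{K}_k$ through the spectral decomposition of $W$ and combining Weyl-type singular value perturbation bounds with the projection identity $K - \tilde{K} = Y(I-P)Y^\top$ and the estimate $\Norm{Y(I-P)Y^\top}_F^2 \le \Norm{K^2 - CC^\top}_F$ valid for PSD $K$. Substituting the concentration bound of the preceding paragraph and using the sample-size assumption $l > 64k\eta^2/\epsilon^4$ then makes the additive term at most $\epsilon \sum_i K_{ii}^2$, completing the argument.

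The main obstacle is this structural lemma coupling $\Norm{K - \tilde{K}_k}_F$ to $\Norm{K^2 - CC^\top}_F^{1/2}$: the square-root dependence on the matrix multiplication error is exactly what propagates into the $\epsilon^4$ scaling of the sample size $l$ in the hypothesis. Everything else is routine concentration and linear-algebra bookkeeping once this bound is in hand.
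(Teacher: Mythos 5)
First, a point of reference: the paper does not prove this statement anywhere --- it is quoted in the related-work section as Theorem~3 of Drineas and Mahoney's \Nystrom{} paper, so the only proof to compare against is the one in that original source. Your overall architecture does match it: factor $K=YY^\top$, identify the \Nystrom{} approximant with $YPY^\top$ for a projection $P$, reduce the error to an approximate matrix-multiplication problem, and control that by an expectation bound plus a McDiarmid-type concentration argument with exactly the stated $\eta$. Your diagnosis that a square-root coupling between the \Nystrom{} error and the matrix-multiplication error is what produces the $\epsilon^4$ in the sample size is also correct.

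There is, however, a genuine gap, and it sits exactly at the step you yourself flag as the main obstacle. The inequality you propose to use, $\Norm{Y(I-P)Y^\top}_F^2 \leq \Norm{K^2 - CC^\top}_F$, is false. Take $K=I_n$ with $l\ll n$ samples: then $P$ has rank at most $l$, so the left-hand side is at least $n-l$, while the right-hand side is of order $n/\sqrt{l}$; the inequality fails by roughly a factor of $\sqrt{n}$. The reason it cannot hold is that it is missing the unavoidable additive best-rank-$k$ term. The standard ingredient here (the Drineas--Kannan--Mahoney low-rank approximation lemma) has the form $\Norm{(I-P_k)Y^\top}_F^2 \leq \Norm{Y^\top - (Y^\top)_k}_F^2 + 2\sqrt{k}\,\Norm{Y^\top Y - (Y^\top S)(Y^\top S)^*}_F$, where $P_k$ projects onto the top-$k$ left singular vectors of $Y^\top S$, and it is precisely this lemma that simultaneously produces the baseline $\Norm{K-K_k}_F$, the $\sqrt{k}$ factor, and the half-power of the matrix-multiplication error.

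A second, related mismatch: the matrix product that this lemma consumes is $Y^\top Y$ (the Gram matrix of the square-root factor) approximated by the column sample $Y^\top S$ of the \emph{factor}, not $K^2$ approximated by $CC^\top$ with $C=KS$. These are not interchangeable: one has $\Norm{K^2-CC^\top}_F \leq \Norm{K}\,\Norm{Y^\top Y - Y^\top SS^\top Y}_F$ but no converse bound, so your concentration estimate, while correct for the quantity you compute, is not the one the structural lemma needs. Finally, your projection identity $\tilde{K}=YPY^\top$ is for the untruncated approximant $CW^{+}C^\top$; for $\tilde{K}_k = CW_k^{+}C^\top$ you must replace $P$ by the projection onto the top-$k$ singular subspace of $Y^\top S$, and the interaction of this truncation with the sampling is where all the work lies. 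Since the entire content of the theorem lives in the structural lemma, deferring it while sketching it via a false intermediate inequality leaves the proof incomplete.
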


The \Nystrom{} method has proved to be a powerful tool in a range of applications where the matrices are approximately low rank. The method in its present form was developed by Williams and Seeger~\cite{williams2001using} as a sampling-based algorithm to solve regression and classification problems involving Gaussian processes. This problem requires the approximation of symmetric, positive semidefinite matrices that can be well low-rank approximated~\cite{williams2001using,williams2002observations}. The technique proved to be closely related to a method for solving linear integral equations developed by \Nystrom{}~\cite{nystrom1930praktische} and hence the name \Nystrom{} method.

It is worth mentioning the \textit{\Nystrom{} extension}, a further refinement of the technique that has found numerous applications ranging from large-scale machine learning problems, to applications in statistics and signal processing~\cite{williams2001using,williams2002observations,zhang2010clustered,talwalkar2008large,fowlkes2004spectral,kumar2012sampling,belabbas2007fast,belabbas2008sparse,kumar2009sampling,li2010making,mackey2011divide,zhang2010clustered,zhang2008improved}. Typical extensions that substantially improve the performance, e.g. lead to lower reconstruction error, introduce non-uniform importance sampling distributions or random mixing of the input before sampling the columns. 

\subsection{Organization}

Section~\ref{sec:preliminaries} introduces relevant notation and definitions. Section~\ref{sec:sampling} discusses the row-searchability condition and the algorithm to sample efficiently from the rows of the Hamiltonian. Section~\ref{sec:PSD} proves the theorem for the approximate simulation of exponentials of PSD matrices. Section~\ref{sec:Hermitian} proves the theorem for the approximate simulation of exponentials of Hermitian matrices. Section~\ref{sec:applications} discusses applications to the simulation of the evolution of density matrices.

\section{Preliminaries}
\label{sec:preliminaries}

We denote vectors with lower-case letters. For a vector $x\in \mathbb{C}^n$, let $x_i$ denotes the $i$-th element of $x$. A vector is sparse if most of its entries are $0$. For an integer $k$, let $[k]$ denotes the set $\{1,\dots, k\}$.  

For a matrix $A\in \mathbb{C}^{m \times n}$ let $a^j := A_{:,j}$, $j \in [n]$ denote the $j$-th column vector of $A$, $a_i := A_{i,:}$, $i \in [m]$ the $i$-th row vector of $A$, and $a_{ij}=A(i,j)$ the $(i,j)$-th element. We denote by $A_{i:j}$ the sub-matrix of $A$ that contains the rows from $i$ to $j$. 

The \textit{supremum} is denoted as $\mathrm{sup}$ and the \textit{infimum} as $\mathrm{inf}$. For a measure space $(X,\Sigma,\mu)$, and a measurable function $f$ an essential upper bound of $f$ is defined as $U_f^{\text{ess}} := \{ l \in \mathbb R : \mu(f^{-1}(l, \infty) = 0 \}$ if the measurable set $f^{-1} (l, \infty)$ is a set of measure zero, i.e., if $f(x) \leq l$ for almost all $x \in X$.  Then the essential supremum is defined as $\text{ess sup} f := \mathrm{inf}\ U_f^{\text{ess}}$.
We let  the $\mathrm{span}$ of a set $S = \{v_i\}_1^k \subseteq \mathbb C^n$ be defined by
$\mathrm{span}\{S\} := \left\lbrace x \in \mathbb C^n \, | \, \exists \, \{\alpha_i \}_1^k \subseteq \mathbb C\text{ with } x =\sum_{i=1}^k \alpha_i v_i \right\rbrace$. The set is linearly independent if $\sum_i \alpha_i v_i = 0$ if and only if $\alpha_i=0$ for all $i$. 
The range of $A \in \mathbb{C}^{m \times n}$ is defined by $\text{range}(A) = \{ y \in \mathbb R^m : y = A x \text{ for some } x \in \mathbb C^n \} = \mathrm{span}(A^1 ,\ldots, A^n)$. Equivalently the range of $A$ is the set of all linear combinations of the columns of $A$. 
The nullspace $\mathrm{null}(A)$ (or kernel $\mathrm{ker}(A)$) is the set of vectors such that $A v = 0$. Given a set $S = \{ v_i \}_1^k \subseteq \mathbb C^n$. 
The null space of $A$ is $\mathrm{null}(A) = \{ x \in \mathbb R ^b : A x = 0 \}$.

The \textit{rank} of a matrix $A \in \mathbb C^{m \times n}$, $\mathrm{rank}(A)$ is the dimension of $\mathrm{range}(A)$ and is equal to the number of linearly independent columns of $A$; Since this is equal to $\mathrm{rank}(A^T)$ it also equals the number of linearly independent rows of $A$, and satisfies $\mathrm{rank}(A) \leq \min \{m,n\}$. 
The trace of a matrix is the sum of its diagonal elements $\tr{A} = \sum_i a_{ii}$. The support of a vector $\mathrm{supp}(v)$ is the set of indices $i$ such that $v_i =0$ and we call it sparsity of the vector. For a matrix we denote the sparsity as the number of non zero entries, while row or column sparsity refers to the number of non-zero entries per row or column. 
A symmetric matrix $A$ is positive semidefinite (PSD) if all its eigenvalues are non-negative. For a PSD matrix $A$ we write $A \succeq 0$. Similarly $A \succeq B$ is the partial ordering which is equivalent to $A-B \succeq 0$. 

We use the following standard norms. The Frobenius norm $
\norm{A}_F = \sqrt{\sum_{i=1}^m \sum_{j=1}^n A_{ij}^2}$, and the spectral norm $\norm{A} = \sup_{x \in \mathbb C^n, \; x \neq 0} \frac{|A x|}{|x|}$. Note that that $\norm{A}_F^2 = \tr{A^T A} = \tr{A A^T}$. Both norms are submultiplicative and unitarily invariant and they are related to each other as $\norm{A} \leq \norm{A}_F \leq \sqrt{n} \norm{A}$. 

The singular value decomposition of $A$ is $A= U \Sigma V^*$ where $U,V$ are unitary matrices and $U^*$ defines the complex conjugate transpose, also called Hermitian conjugate, of $U$. We denote the pseudo-inverse of a matrix $A$ with singular value decomposition $U \Sigma V^*$ as $A^{\dag}:=V \Sigma^{\dag} U^*$.

\section{From row-searchability to efficient row-sampling}
\label{sec:sampling}

All our algorithms require row-searchable Hamiltonians. In this section we define the row-searchability condition and describe an efficient algorithm to sample from the rows of row-searchable Hamiltonians.

Let $n \in \N$. We first introduce a binary tree of subsets spanning $\{0,1\}^n$. In the following, with abuse of notation, we identify binary tuples with the associated binary number. Let $L$ be a binary string with $|L| \leq n$, where $|L|$ denotes the length of the string. We denote with $S(L)$ the set
\begin{align}
S(L) = \{L\} \times \{0,1\}^{n - |L|} = \{(L_1,\dots, L_{|L|}, v_1,\dots,v_{n - |L|}) ~|~ v_1,\dots,v_{n - |L|} \in \{0,1\} \}.
\end{align}
We are now ready to state the row-searchability property for a matrix $H$.

\begin{definition}[Row-searchability]
  \label{cond:row-search}
  Let $H$ be a Hermitian matrix of dimension $2^n$, for $n \in \N$. $H$ is {\em row-searchable} if, for any binary string $L$ with $|L| \leq n$, it is possible to compute the following quantity in $O(\poly(n))$
  \begin{equation}
    \label{norm_diag}
    w(S(L)) = \sum_{i \in S(L)} h(i,H_{:,i}),
  \end{equation}
  where $h$ is the function computing the weight associated to the $i$-th column $H_{:,i}$.
  For positive semidefinite $H$ we use $h(i,H_{:,i}) = H_{i,i}$. i.e.\ the diagonal element $i$ while for general Hermitian $H$ we use $h(i, H_{:,i}) = \|H_{:,i}\|^2$.
\end{definition}
Row-searchability intuitively works as follows. If we are given a binary tree, where the leaves contain the individual probabilities and the parents at each level contain
the marginal over their children nodes, then we can, for a randomly sampled number in $[0,1]$ traverse this tree in $\log(N)$ time to find the leave node that is sampled, i.e.\ the indices of the column of $H$. More specifically, row-searchability requires the evaluation of $w(S(L)))$ as defined in Eq.~(\ref{norm_diag}) which computes marginals of the diagonal of $H$, where the co-elements, \textit{i.e.} the elements where we are not summing over, are defined by the tuple $L$. Hence, for empty $L$, $w(S(L)) = \tr{H}$. 
\begin{algorithm}[t]
\caption{MATLAB code for the sampling algorithm \label{alg:sampling}}
\begin{flushleft}
{\bf Input:} \texttt{wS(L)} corresponds to the function $w(S(L))$ defined in Eq.~\ref{norm_diag}.\\
{{\bf Output:} $L$ is the sampled row index}
\end{flushleft}
\begin{center}
\begin{verbatim}
L = [];
q = rand()*wS(L);

for i=1:n
  if q >= wS([L 0])
      L = [L 0];
  else
      L = [L 1];
      q = q - wS([L 0]);
  end
end
\end{verbatim}
\end{center}
\end{algorithm}
Note that the function $h$ considered is related to leverage score sampling, which is a common approach in randomized algorithm for linear algebra~\cite{mahoney2011randomized,woodruff2014sketching}.

In Alg.~\ref{alg:sampling} we provide an algorithm, that, given a row-searchable $H$, is able to sample an index with probability $p(j) = h(j, H_{:,j})/w(\{0,1\}^n)$.
Let $q$ be a random number uniformly sampled in $[0,T]$, where $T = w(\{0,1\}^n)$ is the sum of the weights associated to all the rows.
The algorithm uses logarithmic search, starting with $L$ empty and adding iteratively $1$ or $0$, to find the index $L$ such that $w(\{0, \dots, L-1\}) \leq q \leq w(\{0, \dots, L\})$.
The total time required to compute one index, is $O(n Q(n))$ where $Q(n)$ is the maximum time required to compute a $w(S(L))$ for $L \in \{0,1\}^n$.
Note that if $w(S(L))$ can be computed efficiently for any $L \in \{0,1\}^n$ then $Q(n)$ is polynomial and the cost of the sampling procedure will be polynomial.
\begin{remark}[Row-searchability more general than sparsity]
Note that if $H$ has a polynomial number of non-zero elements, then $w(S(L))$ can be always computed in polynomial time. Indeed given $L$, we go through the list of elements describing $H$ and select only the ones whose row-index starts with $L$ and then compute $w(S(L))$, both step requiring polynomial time. However $w(S(L))$ can be computed efficiently even for Hamiltonians that are not polynomially sparse. For example, take the diagonal Hamiltonian defined by $H_{ii} = 1/i$ for $i \in [2^n]$. This $H$ is not polynomially sparse and in particular it has an exponential number of non-zero elements, but still $w(S(L))$ can be computed in polynomial time, here in particular in $O(1)$.
\end{remark}

\section{Algorithm for PSD row-searchable Hermitian matrices}
\label{sec:PSD}

Given a $2^n \times 2^n$ matrix $H \succeq 0$, our goal is to produce an approximation of the state
\begin{equation}\label{eq:true-state}
\psi(t) = \exp{(i H t)} \psi.
\end{equation}
In particular, we will provide an expression of the form $\widehat{\exp}(i \widehat{H} t) \psi$,
where $\widehat{\exp}$ and $\widehat{H}$ are a suitable approximation respectively of the exponential function and $H$.
We give here an algorithm for $H\succeq 0$ which we then generalize in the following section to arbitrary Hermitian $H$, if the row-searchability condition, i.e. condition~\ref{cond:row-search}, is fulfilled.

Let $h$ be the diagonal of the positive semidefinite $H$ and let $t_1,\dots, t_M$, with $M \in \N$ be indices independently sampled with repetition from $\{1,\dots, 2^n\}$, with probabilities
\begin{equation}
  \label{eq:probabilities}
  p(q) = h_q / \sum_i h_i,
\end{equation}
e.g. via Alg.~\ref{alg:sampling}.
Then, define the matrix $B \in \CC^{M \times M}$ such that $B_{i,j} = H_{t_i, t_j}$, for $1 \leq i,j \leq M$. Finally, denote by $A \in \CC^{2^n \times M}$ the matrix $A_{i,j} = H_{i,t_j}$ for $1 \leq i \leq 2^n$ and $1 \leq j \leq M$.
The approximated matrix is defined as $\widehat{H} = A B^{\dag} A^*$, where $(\cdot)^\dag$ is the pseudoinverse.
Let us also define a function $g(x) = (e^{i t x} - 1)/x$. Then we have $e^{i t x} = 1 + g(x)x$. Note that $g$ is an analytic function, in particular,
$$g(x) = \sum_{k\geq 1} \frac{(i t)^k}{k!} x^{k-1}.$$Then
\begin{dmath}
e^{i\widehat{H}t} = I + g(\widehat{H})\widehat{H} = I + g(A B^{\dag} A^*)A B^{\dag} A^* = I + A g(B^{\dag} A^* A) B^{\dag} A^*,
\end{dmath}
where the last step is due to the fact that, given an analytic function $q(x) = \sum_{k \geq 0} \alpha_k x^k$, we have
\begin{align*}
q(A B^{\dag} A^*)A B^{\dag} A^* &= \sum_{k\geq 1} \alpha_k (A B^{\dag} A^*)^{k} A B^{\dag} A^* \\
& = A \sum_{k\geq 1} \alpha_k (B^{\dag} A^* A)^k B^{\dag} A^* = A q(B^{\dag} A^* A) B^{\dag} A^*.
\end{align*}
By writing $D = B^{\dag} A^* A$, the algorithm is now
$$\widehat{\psi}_M(t) = \psi +  A g(D) B^{\dag} A^* \psi.$$
Now we approximate $g$ with $g_K(x)$, which limits the series defining $g$ to the first $K$ terms, for $K \in \N$. Moreover, by rewriting $g_K(D)B^{+}(A^*\psi)$ in an iterative fashion, we have
$$b_j = \frac{(i t)^{K-j}}{(K-j)!}v + D b_{j-1}, \quad v =  B^{\dag} (A^* \psi),$$
with $b_0 = \frac{(i t)^{K}}{K!}v$ and so $b_{K-1} =   g(D) B^{\dag} A^* \psi $. Finally, the new approximate state is given by
\begin{equation}\label{eq:algo}
\widehat{\psi}_{K, M}(t) = \psi +  A b_{K-1}.
\end{equation}
\begin{algorithm}[t]
\caption{MATLAB code for approximating Hamiltonian dynamics when $H$ is PSD\label{alg:Nystrom}}
\begin{flushleft}
{\bf Input:} $M$, $T = t_1,\dots, t_M$ list of indices computed via Alg.~\ref{alg:sampling}. The function \texttt{compute\_H\_subMatrix}, given two lists of indices, computes the associated submatrix of $H$. \texttt{compute\_psi\_subVector}, given a list of indices, computes the associated subvector of $\psi$.\\
{\bf Output:} vector $b$, s.t. $b=e^{iHt}\psi$.
\end{flushleft}
\begin{center}
\begin{verbatim}
B = compute_H_subMatrix(T, T);

D = zeros(M,M);
v = zeros(M,1);
for i=1:(2^n/M)
    E = compute_H_subMatrix((i-1)*M+1:M*i, T);
    D = D + E'*E;
    v = v + E'*compute_Psi_subVector((i-1)*M+1:M*i);
end
u
D = B\D;

b = zeros(M,1);
for j=1:K
    b = (1i*t)^(K-j)/factorial(K-j) * v + D*b;
end
\end{verbatim}
\end{center}
\end{algorithm}
A MATLAB implementation of this procedure is presented in Alg.~\ref{alg:Nystrom}. \\
Let the row sparsity $s$ of $H$ be of order $\poly(n)$. Then the total cost of applying this operator is given by $O(M^2 \poly(n) + KM^2 + M^3) )$ in time, where the terms $M^3$ and $M^2 \poly(n)$ are resulting from the calculation of $D$.
%The total cost is $O(2^n M^2 + M^3 + M^2K)$ in time.
To compute the total cost in space, note that we do not have to save $H$ or $A$ in memory, but only $B, D$ and the vectors $v, b_j$, for a total cost of $O(M^2)$. Indeed $D$ can be computed in the following way. Assuming, without loss of generality, to have $2^n/M \in \N$, then
$$D = B^{-1} \sum_{i=1}^{2^n/M} A_{M(i-1)+1:Mi}^*A_{M(i-1)+1:Mi},$$
where $A_{a:b}$ is the submatrix of $A$ containing the rows from $a$ to $b$. A similar reasoning holds for the computation of the vector $v$.
In this computation we have assumed that the sample probabilities are give to us and that we can efficiently sample from the matrix $H$ according to these probabilities. In order to make our algorithm practical we hence need to give an algorithm for performing the sampling.
This properties of the SPDS case algorithm are summarized in the following theorem.

\begin{theorem}[Algorithm for simulating PSD row-searchable Hermitian matrices]
\label{thm:main_psd}
Let $\epsilon, \delta \in (0,1]$, let $K, M \in \N$ and $t > 0$. Let $H$ be positive semidefinite, where $K$ is the number of terms in the truncated series expansions
of $g(\hat H)$ and $M$ the number of samples we take for the approximation.
Let $\psi(t)$ be the true evolution (Eq.~\ref{eq:true-state}) and let $\widehat{\psi}_{K, M}(t)$ be the output of our Alg.~\ref{alg:Nystrom} (Eq.~\ref{eq:algo}).
When
\begin{equation}
K \geq e\, t\norm{H} + \log\frac{2}{\epsilon},\qquad
\quad M \geq \max\left(405 \tr{H},~ \frac{72\tr{H} t}{\epsilon} \log\frac{36\tr{H} t}{\epsilon\delta}\right),
\end{equation}
then the following holds with probability $1-\delta$,
$$\norm{\psi(t) - \widehat{\psi}_{K,M}(t)} \leq \epsilon.$$
\end{theorem}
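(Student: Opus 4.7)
The plan is to split the total error by the triangle inequality as
\begin{equation*}
\|\psi(t)-\widehat{\psi}_{K,M}(t)\| \;\le\; \|e^{iHt}\psi - e^{i\widehat{H} t}\psi\| \;+\; \|e^{i\widehat{H} t}\psi - \widehat{\psi}_{K,M}(t)\|,
\end{equation*}
and to show that, under the stated hypotheses on $K$ and $M$, the first (Nystr\"om) term is at most $\epsilon/2$ with probability at least $1-\delta$ and the second (Taylor truncation) term is at most $\epsilon/2$ deterministically. Note that by the algebraic identity $A\,q(B^{\dag} A^*A)B^{\dag}A^* = q(AB^{\dag}A^*)\,AB^{\dag}A^*$ used in the derivation of (\ref{eq:algo}), the iterative construction of $b_{K-1}$ computes exactly $T_K(\widehat H)\psi$, where $T_K$ is the degree-$K$ Taylor polynomial of $z\mapsto e^{izt}$. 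So the two terms really are a Nystr\"om error and a truncation error of the same, PSD, surrogate operator $\widehat H = A B^{\dag}A^*$.

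For the truncation term I would use the standard Nystr\"om PSD domination $0\preceq \widehat H \preceq H$ (a consequence of $\widehat H$ being an oblique projection of $H$ onto the span of the sampled rows), which yields $\|\widehat H\|\le \|H\|$. Combined with the usual Stirling-based Taylor tail bound $\sum_{k>K} z^k/k! \le (ez/K)^K$ evaluated at $z = t\|\widehat H\|$, this gives $\|e^{i\widehat H t}-T_K(\widehat H)\| \le \epsilon/2$ whenever $K\ge e\,t\|H\| + \log(2/\epsilon)$, which is exactly the hypothesis of the theorem. Unitarity of $\psi$ then transports this operator-norm bound to the desired vector-norm bound.

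For the Nystr\"om term I would apply Duhamel's formula
\begin{equation*}
e^{iHt} - e^{i\widehat H t} \;=\; i\int_0^t e^{iHs}\,(H-\widehat H)\,e^{i\widehat H(t-s)}\,ds,
\end{equation*}
which, together with the unitarity of the two Hermitian exponentials, produces the Lipschitz estimate $\|e^{iHt}\psi - e^{i\widehat H t}\psi\| \le t\,\|H-\widehat H\|$. It therefore suffices to show that the sampling law $p(q)=H_{qq}/\tr{H}$ with the prescribed $M$ guarantees $\|H-\widehat H\|\le \epsilon/(2t)$ with probability $\ge 1-\delta$.

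The hard step is precisely this spectral-norm Nystr\"om concentration bound. The strategy I would follow is to express $\widehat H$ as the reweighted empirical average of $M$ rank-one operators formed from the sampled columns (via the $B^{\dag}$ inner term acting as the oblique projector onto $\mathrm{range}(A)$), and then invoke a matrix Bernstein / non-commutative Chernoff inequality. The diagonal-proportional sampling controls both the per-sample operator norm and the matrix variance proxy in terms of $\tr{H}$, producing a deviation of the form $\widetilde O\bigl(\tr{H}/M\bigr) + \widetilde O\bigl(\sqrt{\|H\|\tr{H}/M}\bigr)$ up to $\log(1/\delta)$ factors. Matching this against $\epsilon/(2t)$ yields the two regimes $M\gtrsim \tr{H}$ (the $405\tr{H}$ term, dominant when $t\|H\|\lesssim \epsilon$) and $M\gtrsim (t\tr{H}/\epsilon)\log(t\tr{H}/(\epsilon\delta))$ (the dominant regime) that appear in the hypothesis; combining the two error bounds closes the proof.
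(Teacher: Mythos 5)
Your overall skeleton matches the paper's: the same triangle-inequality split through $e^{i\widehat H t}$, the Lipschitz bound $\norm{e^{iHt}-e^{i\widehat H t}}\le t\norm{H-\widehat H}$ (the paper cites a norm inequality of Nakamoto where you use Duhamel, which is equivalent), the domination $\widehat H\preceq H$ giving $\norm{\widehat H}\le\norm{H}$, and a Stirling-controlled Taylor tail for the truncation term; this is essentially Lemma~\ref{lemma:base-decomposition} together with the paper's choice of $K$.

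The gap is in the step you yourself flag as the hard one. First, $\widehat H=AB^{\dag}A^*$ is \emph{not} a reweighted empirical average of independent rank-one terms: the inner factor $B^{\dag}$ depends jointly on all $M$ sampled indices, so you cannot feed $\widehat H-H$ directly into a matrix Bernstein inequality. Second, and more importantly, even granting some version of that route, the deviation you predict, $\tOrd{\tr{H}/M}+\tOrd{\sqrt{\norm{H}\tr{H}/M}}$, is quantitatively incompatible with the theorem: matching the $\sqrt{\,\cdot\,/M}$ term against $\epsilon/(2t)$ forces $M\gtrsim t^2\norm{H}\tr{H}/\epsilon^2$, whereas the theorem only requires $M\gtrsim (t\,\tr{H}/\epsilon)\log(\cdot)$, i.e.\ a $1/M$ rather than $1/\sqrt{M}$ rate. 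The paper obtains the faster rate by a structural argument (Lemmas~\ref{lemma:nystrom-analytic-bound} and~\ref{lemma:nystrom-probabilistic-bound}): writing $H=SS^*$, one shows $\widehat H=S\widehat P S^*$ with $\widehat P$ an orthogonal projection onto the span of the sampled $s_{t_j}$, hence $\norm{H-\widehat H}=\norm{(I-\widehat P)S^*}^2$; this quantity is bounded by $\tau/(1-\beta(\tau))$ for any $\tau>0$, where $\beta(\tau)=\lambda_{\max}((C+\tau I)^{-1/2}(C-\widetilde C)(C+\tau I)^{-1/2})$, $C=S^*S$, and $\widetilde C$ is the genuinely i.i.d.\ importance-weighted empirical second-moment operator. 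Matrix Bernstein is applied only to show $\beta(\tau)\le 1/2$ at the level $\tau=\Theta(\tr{H}\log(M/\delta)/M)$, which yields $\norm{H-\widehat H}\le 2\tau$ and hence the stated $M$. Without this projection-plus-relative-concentration argument (or an equivalent), your proof cannot reach the theorem's sample complexity.
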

Note that with the result above, we have that $\widehat{\psi}_{K,M}(t)$ in Eq.~\eqref{eq:algo} (Alg.~\ref{alg:Nystrom}) approximates $\psi(t)$, with error at most $\epsilon$ and with probability at least $1-\delta$, requiring a computational cost that is $\Ord{\frac{s t^2\tr{H}^2}{\epsilon^2} \log^2\frac{1}{\delta}}$ in time and $\Ord{\frac{t^2\tr{H}^2}{\epsilon^2} \log^2\frac{1}{\delta}}$ in memory.

In the following we now prove the first main result of this work.
To prove Theorem~\ref{thm:main_psd} we need the following lemmas. Lemma~\ref{lemma:base-decomposition} performs a basic decomposition of the error in terms of the distance between $H$ and
the approximation $\widehat{H}$ as well as in terms of the approximation $g_K$ with respect to $g$.
Lemma~\ref{lemma:nystrom-analytic-bound} provides an analytic bound on the distance between $H$ and $\widehat{H}$,
expressed in terms of the expectation of eigenvalues or related matrices which are then concentrated in Lemma~\ref{lemma:nystrom-probabilistic-bound}.

\begin{lemma}\label{lemma:base-decomposition}
Let $K, M \in \N$ and $t > 0$, then
$$\norm{\psi(t) - \widehat{\psi}_{K, M}(t)} \leq t \norm{H - \widehat{H}} + \frac{(t\norm{\widehat{H}})^{K+1}}{(K+1)!}.$$
\end{lemma}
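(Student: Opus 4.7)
The plan is to split the error via the triangle inequality by inserting the exact evolution of the approximate Hamiltonian:
\begin{equation*}
\norm{\psi(t) - \widehat{\psi}_{K,M}(t)} \leq \norm{e^{iHt}\psi - e^{i\widehat{H}t}\psi} + \norm{e^{i\widehat{H}t}\psi - \widehat{\psi}_{K,M}(t)}.
\end{equation*}
The first term will yield the $t\norm{H-\widehat{H}}$ contribution, while the second will yield the Taylor truncation remainder. Both pieces use that $\norm{\psi}=1$ and that $\widehat{H} = AB^\dag A^*$ is Hermitian (since $B$, and hence $B^\dag$, are Hermitian), so $e^{i\widehat{H}t}$ is unitary.

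For the first term, I would apply Duhamel's formula
\begin{equation*}
e^{iHt} - e^{i\widehat{H}t} = i\int_0^t e^{iHs}(H - \widehat{H})\, e^{i\widehat{H}(t-s)}\, ds,
\end{equation*}
take norms, and use unitary invariance of the spectral norm to bound the integrand by $\norm{H-\widehat{H}}$. Integrating over $s\in[0,t]$ produces $t\norm{H-\widehat{H}}$, which is the first summand in the claim.

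For the second term, the key observation is that $\widehat{\psi}_{K,M}(t)$ is exactly the order-$K$ Taylor truncation of $e^{i\widehat{H}t}\psi$. I would verify this by unrolling the recursion in the algorithm: a straightforward induction on $j$ shows
\begin{equation*}
b_j = \sum_{k=0}^{j} \frac{(it)^{K-j+k}}{(K-j+k)!}\, D^{k}v, \qquad v = B^\dag A^*\psi,
\end{equation*}
so $b_{K-1} = \sum_{m=1}^{K} \tfrac{(it)^m}{m!}\, D^{m-1}B^\dag A^*\psi$. Using the same manipulation as the one already carried out in the excerpt, namely $\widehat{H}^{m} = A D^{m-1} B^\dag A^*$ for $m\geq 1$, this gives
\begin{equation*}
\widehat{\psi}_{K,M}(t) = \psi + Ab_{K-1} = \sum_{m=0}^{K} \frac{(it\widehat{H})^m}{m!}\, \psi.
\end{equation*}

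It then remains to bound the Taylor remainder $\bigl\| \sum_{m > K} (it\widehat{H})^m/m!\, \psi\bigr\|$ by $(t\norm{\widehat{H}})^{K+1}/(K+1)!$. The main (minor) obstacle is that the naive geometric bound $\sum_{m>K}(t\norm{\widehat{H}})^m/m!$ carries an extraneous $e^{t\norm{\widehat H}}$ factor. To avoid this, I would diagonalize the Hermitian $\widehat{H}=\sum_j \lambda_j P_j$ and apply the scalar Taylor theorem to $f(x)=e^{ix}$ on each eigenspace: since $|f^{(K+1)}|\equiv 1$, the Lagrange remainder gives $|e^{it\lambda_j} - \sum_{m=0}^{K}(it\lambda_j)^m/m!| \leq (t|\lambda_j|)^{K+1}/(K+1)!$ uniformly, whence the operator-norm bound $\norm{e^{i\widehat{H}t}-\sum_{m=0}^{K}(it\widehat{H})^m/m!} \leq (t\norm{\widehat{H}})^{K+1}/(K+1)!$ follows. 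Combining this with the Duhamel bound finishes the proof.
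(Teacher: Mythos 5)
Your proof is correct and uses the same top-level decomposition as the paper: insert $e^{i\widehat{H}t}\psi$, bound the perturbation term by $t\norm{H-\widehat H}$, and bound the truncation term by the Taylor remainder. The difference is that the paper simply cites two references for the key inequalities---\cite{nakamoto2003norm} for $\norm{e^{iHt}-e^{i\widehat H t}}\le t\norm{H-\widehat H}$ and \cite{mathias1993approximation} for the remainder bound $\norm{e^{i\widehat H t}-(I+g_K(\widehat H)\widehat H)}\le \frac{(t\norm{\widehat H})^{K+1}}{(K+1)!}\sup_{l\in[0,1]}\norm{i^{K+1}e^{il\widehat Ht}}$---whereas you prove both from scratch (Duhamel plus unitarity for the first, spectral theorem plus the scalar Taylor remainder for the second), and you additionally verify by unrolling the recursion that $\widehat\psi_{K,M}(t)=\sum_{m=0}^{K}\frac{(it\widehat H)^m}{m!}\psi$, an identity the paper asserts implicitly. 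Your version is therefore more self-contained; the only imprecision is the appeal to the \emph{Lagrange} form of the remainder for $f(x)=e^{ix}$: since $f$ is complex-valued, the mean-value argument underlying the Lagrange form fails (and applying it separately to $\cos$ and $\sin$ would cost a factor $\sqrt{2}$). Use instead the integral form $f(x)-T_K(x)=\frac{1}{K!}\int_0^x (x-s)^K f^{(K+1)}(s)\,ds$, which with $|f^{(K+1)}|\equiv 1$ gives exactly $|x|^{K+1}/(K+1)!$ per eigenvalue; this is precisely the bound the paper imports from \cite{mathias1993approximation}. With that one substitution your argument is complete.
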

\begin{proof}
By definition we have that $e^{ixt} = 1 + g(x)x$ with
$g(x) = \sum_{k \geq 1} x^{k-1}(i t)^k/k!$ and $g_K$ is the truncated version of $g$. By adding and subtracting $e^{i \widehat{H} t}$, we have
\begin{equation}
\norm{e^{iHt}\psi - (I+g_K(\widehat{H})\widehat{H})\psi}
\leq \norm{\psi}\ (\norm{e^{iHt} - e^{i\widehat{H}t}} + \norm{e^{i\widehat{H}t} - (I + g_K(\widehat{H})\widehat{H})}).
\end{equation}
By \cite{nakamoto2003norm},
\begin{dmath}
\norm{e^{iHt} - e^{i\widehat{H}t}} \leq t \norm{H - \widehat{H}},
\end{dmath}
moreover, by \cite{mathias1993approximation}, and since $\widehat{H}$ is Hermitian and hence all the eigenvalues are real, we have
\begin{dmath}
\norm{e^{i\widehat{H}t} - (I + g_K(\widehat{H})\widehat{H})} \leq \frac{(t\norm{\widehat{H}})^{K+1}}{(K+1)!}\sup_{l \in [0,1]}\norm{i^{K+1}e^{i l \widehat{ H}t}} \leq \frac{(t\norm{\widehat{H}})^{K+1}}{(K+1)!}.
\end{dmath}
Finally note that $\norm{\psi} = 1$.
\end{proof}
To study the norm $\norm{H - \widehat{H}}$ note that, since $H$ is positive semidefinite, there exists an operator $S$ such that $H = S S^*$, so $H_{i,j} = s_i^* s_j$ with $s_i,s_j$ the $i$-th and $j$-th row of $S$.
Denote with $C$ and $\widetilde{C}$ the operators
$$C = S^*S, \quad \widetilde{C} = \frac{1}{M}\sum_{j=1}^M \frac{\tr{H}}{h_{t_j}} s_{t_j}s_{t_j}^*.$$
We then obtain the following result.
\begin{lemma}\label{lemma:nystrom-analytic-bound}
The following holds with probability $1$. For any $\tau > 0$,
\begin{equation}
\norm{H - \widehat{H}} \leq \frac{\tau}{1-\beta(\tau)}, \quad
\quad \beta(\tau) = \lambda_{\max}((C+\tau I)^{-1/2}(C- \widetilde{C})(C + \tau I)^{-1/2}),
\end{equation}
moreover $\norm{\widehat{H}} \leq \norm{H}$.
\end{lemma}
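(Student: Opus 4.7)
The plan is to first recognize $\widehat H$ as $S$ conjugating an orthogonal projection, and then reduce the target norm to a manipulation involving $C$ and $\widetilde C$ only. With the shorthand $\widetilde W = [s_{t_1},\dots,s_{t_M}]$ one has the clean factorizations $A = S \widetilde W$ and $B = \widetilde W^* \widetilde W$, so
\[
\widehat H = A B^{\dag} A^* = S\bigl[\widetilde W (\widetilde W^*\widetilde W)^{\dag}\widetilde W^*\bigr] S^* = S P S^*,
\]
where $P$ is the orthogonal projection onto $\mathrm{range}(\widetilde W) = \mathrm{range}(\widetilde C)$. In particular $H - \widehat H = S(I-P) S^*$, and applying $\|XX^*\| = \|X^*X\|$ to $X = S(I-P)$ together with $(I-P)^2 = I-P$ converts the goal into bounding $\|(I-P)C(I-P)\|$. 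As a free byproduct, $\widehat H = SPS^* \preceq SS^* = H$ in the PSD order, which immediately delivers $\|\widehat H\| \leq \|H\|$.

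The next step introduces $\widetilde C$ via a regularized comparison. Since $P$ is the orthogonal projection onto $\mathrm{range}(\widetilde C)$, diagonalizing $\widetilde C$ gives $P \succeq \widetilde C(\widetilde C + \tau I)^{-1}$, equivalently $I - P \preceq \tau(\widetilde C + \tau I)^{-1}$. Combining this with the trivial $(I-P)C(I-P) \preceq (I-P)(C+\tau I)(I-P)$ and applying $\|M^*M\| = \|MM^*\|$ to $M = (C+\tau I)^{1/2}(I-P)$ (to flip the PSD sandwich inside out) yields
\[
\|(I-P)C(I-P)\| \;\leq\; \tau\,\bigl\|(C+\tau I)^{1/2}(\widetilde C + \tau I)^{-1}(C+\tau I)^{1/2}\bigr\|.
\]

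The last step is purely algebraic. Setting $R = (C+\tau I)^{-1/2}$ and using $\widetilde C + \tau I = (C + \tau I) - (C - \widetilde C)$ gives $R(\widetilde C + \tau I) R = I - R(C - \widetilde C) R$, so the operator on the right of the previous display equals $\bigl(R(\widetilde C + \tau I) R\bigr)^{-1}$. Its spectral norm is $1/\lambda_{\min}(R(\widetilde C + \tau I) R) = 1/(1 - \beta(\tau))$ by the very definition of $\beta(\tau)$, and substitution gives the announced bound. Note that the whole argument is deterministic in the random samples $t_1,\dots,t_M$, which is what the ``probability one'' clause records.

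I expect the main fiddly point to be the bookkeeping around the pseudoinverse of $B$ and the potential rank-deficiency of $\widetilde C$: one has to check that $\widetilde W(\widetilde W^*\widetilde W)^{\dag}\widetilde W^*$ really is the orthogonal projector onto $\mathrm{range}(\widetilde W)$ (most cleanly via the SVD of $\widetilde W$), and that the comparison $I - P \preceq \tau(\widetilde C + \tau I)^{-1}$ remains valid on $\ker \widetilde C$, where $\tau > 0$ is exactly what keeps the right-hand side finite. Everything else is routine operator-norm arithmetic.
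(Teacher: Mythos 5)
Your proposal is correct and follows essentially the same route as the paper: factor $\widehat H = SPS^*$ with $P$ the orthogonal projection onto $\mathrm{range}(\widetilde W)=\mathrm{range}(\widetilde C)$, reduce $\|H-\widehat H\|$ to $\|(I-P)C(I-P)\|$, and bound $\|\widehat H\|$ via $P\preceq I$. The only difference is that where the paper delegates the final inequality to Propositions 3 and 7 of the cited reference, you derive it in-line (via $I-P\preceq\tau(\widetilde C+\tau I)^{-1}$ and the identity $R(\widetilde C+\tau I)R = I - R(C-\widetilde C)R$), and your derivation correctly reproduces the content of those propositions.
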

\begin{proof}
Define the selection matrix $V \in \CC^{M\times 2^n}$, that is always zero except for one element in each row which is $V_{j, t_j} = 1$ for $1 \leq j \leq M$. Then we have that
$$A = HV^*, \quad B = V H V^*,$$
i.e., $A$ is again given by the rows according to the sampled indiced $t_1,\ldots, t_M$ and $B$ is the submatrix obtained from taking the rows and columns according to the same indices.
In particular by denoting with $\widehat{P}$ the operator $\widehat{P} = S^*V^*(VSS^*V^*)^\dag VS$, and recalling that $H=SS^*$ and $C=S^*S$, we have
$$\widehat{H} = A B^\dag A^* = SS^*V^*(VSS^*V^*)^\dag VSS^* = S \widehat{P} S^*.$$
By definition $\widehat{P}$ is an orthogonal projection operator, indeed it is symmetric and, by definition $Q^\dag Q Q^\dag = Q^\dag$, for any matrix $Q$, then
\begin{dmath}
\widehat{P}^2  = S^*V^*[(VSS^*V^*)^\dag (VSS^*V^*)(VSS^*V^*)^\dag] VS = S^*V^* (VSS^*V^*)^\dag  VS = \widehat{P}.
\end{dmath}
Indeed this is a projection in the row space of the matrix $R:=S^*V^*$, since with the singular value decomposition $R := U_R \Sigma_R V_R^*$ we have
\begin{dmath}
\hat{P} = R^*(R^*R)^\dag R = V_R \Sigma_R U_R^* U_R \Sigma^{-2}_R U_R^* U_R \Sigma_R V_R^* = V_RV_R^*,
\end{dmath}
which spans the same space as $R$.
Finally, since $(I - \widehat{P}) = (I - \widehat{P})^2$, and $\norm{Z^*Z} = \norm{Z}^2$, we have
\begin{dmath}
\norm{H - \widehat{H}} = \norm{S(I - \widehat{P})S^*} = \norm{S(I - \widehat{P})^2S^*} = \norm{(I - \widehat{P})S^*}^2.
\end{dmath}
Note that $\widetilde{C}$ can be rewritten as $\widetilde{C} = S^*V^*LVS$, with $L$ a diagonal matrix, with $L_{jj} = \frac{\tr{H}}{M h_{t_j}}$. Moreover $t_j$ is sampled from the probability $p(q) = h_q/\tr{H}$, so $h_{t_j} > 0$ with probability 1, then $L$ has a finite and strictly positive diagonal, so $\widetilde{C}$ has the same range of $\widehat{P}$.
Now, with  $C = S^*S$, we are able to apply Proposition 3 and Proposition 7 of \cite{rudi2015lessArxiv}, and obtain
\begin{equation}
\norm{(I - \widehat{P})S^*}^2 \leq \frac{\tau}{1-\beta(\tau)}, \quad
\quad \beta(\tau) = \lambda_{\max}((C+\tau I)^{-1/2}(C- \widetilde{C})(C + \tau I)^{-1/2}).
\end{equation}
Finally, note that, since $\widehat{P}$ is a projection operator we have that $\norm{\widehat{P}} = 1$, so
$$\norm{\widehat{H}} = \norm{S \widehat{P} S^*} \leq \norm{\widehat{P}}\norm{S}^2 \leq \norm{S}^2 = \norm{H},$$
where the last step is due to the fact that $H = SS^*$.
\end{proof}

\begin{lemma}\label{lemma:nystrom-probabilistic-bound}
Let $\delta \in (0,1]$ and $\tau > 0$.
When
\begin{equation}
M \geq \max\left(405 \tr{H},~ 67 \tr{H} \log \frac{\tr{H}}{2\delta}\right), \quad
\quad \tau = \frac{9\tr{H}}{M} \log \frac{M}{2\delta},
\end{equation}
then with probability $1-\delta$ it holds that
$$\lambda_{\max}((C+\tau I)^{-1/2}(C- \widetilde{C})(C + \tau I)^{-1/2}) \leq \frac{1}{2}.$$
\end{lemma}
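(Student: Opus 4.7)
The plan is to recognize that $\widetilde{C}$ is the empirical average of $M$ i.i.d.\ rank-one matrices, and then to apply a matrix Bernstein--type concentration inequality in the whitened regime determined by the regularizer $\tau$. Specifically, write $\widetilde{C} = \tfrac{1}{M}\sum_{j=1}^M Z_j$ with $Z_j = \tfrac{\tr{H}}{h_{t_j}}\, s_{t_j}s_{t_j}^*$. Since $t_j$ is drawn with probability $p(q) = h_q/\tr{H}$, one checks the unbiasedness $\mathbb{E}[Z_j] = \sum_q p(q)\tfrac{\tr{H}}{h_q} s_q s_q^* = \sum_q s_q s_q^* = S^*S = C$, so the quantity of interest is really the deviation of an empirical average from its mean, measured in the $(C+\tau I)^{-1}$ geometry.

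The second step is to control the two quantities driving the matrix Bernstein bound: a uniform norm bound and a variance bound for the whitened summands $\widetilde{Z}_j := (C+\tau I)^{-1/2} Z_j (C+\tau I)^{-1/2}$. For the norm, observe that $s_q s_q^*$ is rank one with operator norm $\|s_q\|^2 = h_q$, so $\|Z_j\| \leq \tr{H}$ deterministically, and hence $\|\widetilde{Z}_j\| \leq \tr{H}/\tau$. For the variance term, one uses $Z_j \preceq \tr{H}\cdot (s_{t_j} s_{t_j}^*/h_{t_j})$ (again a rank-one projector), leading to $\mathbb{E}[Z_j^2] \preceq \tr{H}\cdot C$, and after whitening $\mathbb{E}\widetilde{Z}_j^2 \preceq (\tr{H}/\tau)\cdot \widetilde{C}_0$ with trace bounded by $\tr{H}\cdot \tr((C+\tau I)^{-1}) \leq (\tr{H})^2/\tau$ in the worst case, or more tightly by an effective-dimension argument.

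The third step is to invoke the standard non-commutative Bernstein inequality (this is exactly what Propositions 3 and 7 of \cite{rudi2015lessArxiv} package): for any $\tau>0$,
\begin{equation*}
\Pr\left[\lambda_{\max}\!\left((C+\tau I)^{-1/2}(C-\widetilde{C})(C+\tau I)^{-1/2}\right) > \tfrac{1}{2}\right] \leq 2\,\mathcal{N}\,\exp\!\left(-\tfrac{M}{c\,\tr{H}/\tau}\right),
\end{equation*}
where $\mathcal{N}$ is an effective-rank factor of the form $\tr{H}/\tau$ (or a number dimension-free in the relevant sense). Setting this probability equal to $\delta$ and solving for $M$ in terms of $\tau$ gives the implicit relation $M \gtrsim (\tr{H}/\tau)\log(\tr{H}/(\tau\delta))$.

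The final step is to close the fixed-point by choosing $\tau = \tfrac{9\tr{H}}{M}\log\tfrac{M}{2\delta}$, which is essentially the critical value at which the above condition becomes self-consistent, and then to verify that the two stated lower bounds on $M$ are sufficient to make the argument go through (the $M\geq 405\tr{H}$ branch absorbs the ``$\tau \leq \tr{H}$'' regime, while the second branch handles the usual logarithmic tail). I expect the main obstacle to be purely bookkeeping: tracking the precise constants $405, 67, 9$ through the whitened Bernstein bound and making sure that the self-consistent choice of $\tau$ indeed implies $\beta(\tau)\leq 1/2$, rather than any fresh analytic difficulty, since the structural work (unbiasedness, the $\tr{H}$ uniform bound from diagonal sampling, and the existence of an off-the-shelf concentration inequality) is already in place.
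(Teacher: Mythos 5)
Your proposal follows essentially the same route as the paper: identify $\widetilde{C}$ as an empirical mean of i.i.d.\ rank-one operators $\zeta_j\zeta_j^*$ with $\mathbb{E}[\zeta_j\zeta_j^*]=C$ and $\|\zeta_j\|^2\leq\tr{H}$ (so $\zeta_j^*(C+\tau I)^{-1}\zeta_j\leq\tr{H}/\tau$), apply the non-commutative Bernstein inequality in the whitened geometry, and then fix $\tau=\frac{9\tr{H}}{M}\log\frac{M}{2\delta}$ and the two lower bounds on $M$ to force $\beta(\tau)\leq 1/2$; the paper packages this via Proposition~8 and Remark~1 of \cite{rudi2015lessArxiv} with $\kappa^2=\tr{H}$. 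The only slip is a citation detail: the concentration step corresponds to Proposition~8 of \cite{rudi2015lessArxiv} (Propositions~3 and~7 are the deterministic ingredients used in the preceding lemma), but this does not affect the substance of the argument.
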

\begin{proof}
Define the random variable $\zeta_j = \sqrt{\frac{\tr{H}}{h_{t_j}}} s_{t_j},$
for $1 \leq j \leq M$. Note that
$$\norm{\zeta_j} \leq \sqrt{\frac{\tr{H}}{h_{t_j}}}\norm{s_{t_j}} \leq \sqrt{\tr{H}},$$
almost surely. Moreover,
\begin{equation}
\mathbb{E} \zeta_j \zeta_j^* = \sum_{q=1}^{2^n} p(q) \frac{\tr{H}}{h_q} s_{q} s_q^* 
= \sum_{q=1}^{2^n} s_{q} s_q^*
= S^*S
= C.
\end{equation}
By definition of $\zeta_j$, we have
$$ \widetilde{C} = \frac{1}{M} \sum_{j=1}^M \zeta_j \zeta_j^*.$$
Since $\zeta_j$ are independent for $1 \leq j \leq M$, uniformly bounded, with expectation equal to $C$, and with $\zeta_j^* (C + \tau I)^{-1} \zeta_j \leq \norm{\zeta_j}^2 \tau^{-1} \leq \tr{H}\tau^{-1}$, we can apply Proposition 8 of \cite{rudi2015lessArxiv}, that uses non-commutative Bernstein inequality for linear operators \cite{tropp2012user}, and obtain
\begin{dmath}
\lambda_{\max}((C+\tau I)^{-1/2}(C- \widetilde{C})(C + \tau I)^{-1/2}) \leq \frac{2 \alpha}{3 M} + \sqrt{\frac{2\alpha}{M t}},
\end{dmath}
with probability at least $1 - \delta$, with $\alpha = \log \frac{4 \tr C}{\tau \delta}$.
Since $$\tr{C} = \tr{S^*S} = \tr{SS^*} = \tr{H},$$ by Remark~1 of \cite{rudi2015lessArxiv}, we have that
$$\lambda_{\max}((C+\tau I)^{-1/2}(C- \widetilde{C})(C + \tau I)^{-1/2}) \leq \frac{1}{2},$$
with probability $1-\delta$, when
$M \geq \max(405 \kappa^2, 67 \kappa^2 \log \frac{\kappa^2}{2\delta})$ and $\tau$ satisfies $\frac{9\kappa^2}{M} \log \frac{M}{2\delta} \leq \tau \leq \norm{C}$ (note that $\norm{C} = \norm{H}$), where $\kappa^2$ is a bound for the following quantity
\begin{equation}
\begin{split}
\inf_{\tau > 0}[(\norm{C} + \tau) (\mathrm{ess}\sup \zeta_j^* (C + \tau I)^{-1} \zeta_j)] \\
\leq \tr{H} \inf_{\tau > 0} \frac{\norm{H} + \tau}{\tau} \leq \tr{H} := \kappa^2,
\end{split}
\end{equation}
where $\mathrm{ess}\sup$ here denotes the essential supremum.
\end{proof}
Now we are ready to prove the Theorem for PSD matrices.
\begin{proof}[Proof of Theorem~\ref{thm:main_psd}]
By Lemma~\ref{lemma:base-decomposition}, we have
$$\norm{e^{iHt}\psi - (I+g_K(\widehat{H})\widehat{H})\psi} \leq t \norm{H - \widehat{H}} + \frac{(t\norm{\widehat{H}})^{K+1}}{(K+1)!}.$$
Let $\tau > 0$. By Lemma~\ref{lemma:nystrom-analytic-bound}, we know that $\norm{\widehat{H}} \leq \norm{H}$ and that
$$\norm{H - \widehat{H}} \leq \frac{\tau}{1-\beta(\tau)},$$
$$\quad \beta(\tau) = \lambda_{\max}((C+\tau I)^{-1/2}(C- \widetilde{C})(C + \tau I)^{-1/2}),$$
with probability $1$.
Finally by Lemma~\ref{lemma:nystrom-probabilistic-bound}, we have that the following holds with probability $1-\delta$,
$$\lambda_{\max}((C+\tau I)^{-1/2}(C- \widetilde{C})(C + \tau I)^{-1/2}) \leq \frac{1}{2},$$
when
$$ M \geq \max\left(405 \tr{H},~ 67 \tr{H} \log \frac{\tr{H}}{2\delta}\right)$$
and
$$\quad \tau = \frac{9\tr{H}}{M} \log \frac{\tr{H}}{2\delta}.$$
So we have
$$ \norm{e^{iHt}\psi - (I+g_K(\widehat{H})\widehat{H})\psi} \leq \frac{18 \tr{H}t}{M} \log\frac{M}{2\delta} + \frac{(t\norm{H})^{K+1}}{(K+1)!},$$
with probability $1-\delta$.\\
Now we select $K$ such that $\frac{(t\norm{H})^{K+1}}{(K+1)!} \leq \frac{\epsilon}{2}$. Since, by the Stirling approximation, we have
$$(K+1)! \geq \sqrt{2\pi} (K+1)^{K+3/2} e^{-K-1} \geq  (K+1)^{K+1} e^{-K-1}.$$
Since $$(1+x)\log(1/(1+x)) \leq -x,$$ for $x > 0$ we can select $K = e t \norm{H} + \log\frac{2}{\epsilon} - 1$, such that we have

\begin{align*}
\log\left(\frac{(t\norm{H})^{K+1}}{(K+1)!}\right) &\leq (K+1) \log \frac{e t \norm{H}}{K+1} \\
&\leq e t \norm{H}\left(1 + \frac{\log\frac{2}{\epsilon}}{e t \norm{H}}\right) \log \frac{1}{1 + \frac{\log\frac{2}{\epsilon}}{e t \norm{H}}} \\
&\leq \log\frac{\epsilon}{2}.
\end{align*}
Finally we require $M$, such that $$\frac{18 \tr{H}t}{M} \log\frac{M}{2\delta} \leq \frac{\epsilon}{2},$$ and select $$M = \frac{72\tr{H} t}{\epsilon} \log\frac{36\tr{H} t}{\epsilon\delta}.$$ 
Then we have that
\begin{equation*}
\frac{18 \tr{H}t}{M} \log\frac{M}{2\delta}
\leq \frac{\epsilon}{2} \frac{\log\frac{36\tr{H} t}{\epsilon\delta} +  \log\log\frac{36\tr{H} t}{\epsilon\delta}}{2 \log\frac{36\tr{H} t}{\epsilon\delta}}
\leq \frac{\epsilon}{2}.
\end{equation*}

\end{proof}

We now extend this result to the more general case of arbitrary Hermitian matrices that fulfill the row-searchability condition, i.e.\ the ability to sample according to some leverage of the rows. This will lead to the second main result of this work.

\section{Algorithm for row-searchable Hermitian matrices}
\label{sec:Hermitian}

In this section we provide the algorithm for simulating Hermitian (possibly non-psd) matrices and we provide guarantees on the efficiency when $H$ is row-searchable.
Let in the following $s$ be the maximum number of non-zero elements of the rows of $H$, $\epsilon$ the error in the approximation of the output states of the algorithm w.r.t.\ the ideal $\psi(t)$, and
$t$ the evolution time of the simulation. Let further $K$ be the order of the truncated series expansions and $M$ the number of samples we take for the approximation. As before we first outline the algorithm and then prove its properties.

For arbitrary matrices $H$ we will use the following algorithm.
Sample $M \in \N$ independent indices $t_1, \dots t_M$, with probability $p(i) = \frac{\|h_i\|^2}{\|H\|_F^2}$, $1 \leq i \leq 2^n$, where $h_i$ is the $i$-th row of $H$ (sample via Alg.~\ref{alg:sampling}).
Then denote with $A$, the matrix $2^n \times M$ defined as
$$A = \left[\frac{1}{\sqrt{Mp(t_1)}}h_{t_1}, \dots, \frac{1}{\sqrt{Mp(t_1)}} h_{t_M}\right].$$
Then we will use $H^2=AA^*$ as the approximation for the Hamiltonian.\\
%for which we will establish the correctness in the following.\\
Define two functions that we will use to approximate $e^{ix}$,
$$f(x) = \frac{\cos(\sqrt{x}) - 1}{x}, \quad g(x) = \frac{\sin(\sqrt{x}) - \sqrt{x}}{x\sqrt{x}},$$
moreover denote with $f_K$ and $g_K$ the $K$-truncated Taylor expansions of $f$ and $g$, for $K \in \N$
$$f_K(x) = \sum_{j=0}^K \frac{(-1)^{j+1}x^j}{(2j+2)!}, \quad g_K(x) = \sum_{j=0}^K \frac{(-1)^{j+1}x^j}{(2j+3)!}.$$
In particular note that
$$e^{ix} = 1 + ix + f(x^2)x^2 + i g(x^2)x^3.$$
Analogously to the previous algorithm, we hence estimate $e^{ix}$ via $f_K$ and $g_K$. The final algorithm will be
\begin{equation}\label{eq:algo-frob}
\widehat{\psi}_{K,M}(t) = \psi + i t u + t^2A f_K(t^2A^*A)v + i t^3 A g_K(t^2A^*A) z,
\end{equation}
with $u = \hat H\psi$, $v = A^*\psi$, $z = A^*u$.
Note that the product $f_k(A^*A)v$ and  $A g_K(A^*A) z$ are done by exploiting the Taylor series form of the two functions and performing only matrix vector products in the same way as in Alg.~\ref{alg:Nystrom}. Denote with $s$ the maximum number of non-zero elements in the rows of $H$, with $q$ the number of non-zero elements in $\psi$.
The final algorithm requires $O(sq)$ in space and time to compute $u$, $O(M\min(s,q))$ in time and $O(M)$ in space to compute $v$ and $O(Ms)$ in time and space to compute $z$.
We therefore obtain a total computational complexity of
\begin{align}
&{\rm time:}~~O\left(sq + M\min(s,q) + sMK\right), \\
&\quad {\rm space:}~~O\left(s(q + M)\right).
\end{align}

Note that if $s > M$ is it possible to further reduce the memory requirements at the cost of more computational time, by computing $B = A^*A$ that can be done in blocks and require $O(sM^2)$ in time and $O(M^2)$ in memory, and then compute
$$\widehat{\psi}_{K,M}(t) = \psi + i t u + t^2A f_K(t^2B)v + i t^3 A g_K(t^2B) z.
$$
In that case the computational cost would be
\begin{align}
\label{eq:Timquation}
&{\rm time:}~~O\left(sq + M\min(s,q) + M^2(s +K)\right), \\
\label{eq:Spquation}
&\quad {\rm space:}~~O\left(sq + M^2\right).
\end{align}
The properties of the this algorithm are summarized in the following theorem (this is a formal statement of Theorem~\ref{th:maininformal}): 

\begin{theorem}[Algorithm for simulating row-samplable Hermitian matrices]
\label{thm:main}
Let $\delta, \epsilon \in (0,1]$. Let $t > 0$ and $K, M \in \mathbb{N}$, where $K$ is the number of terms in the truncated series expansions
of $g(\widehat{H})$ and $M$ the number of samples we take for the approximation, and let $t > 0$.
Let $\psi(t)$ be the true evolution (Eq.~\ref{eq:true-state}) and let $\widehat{\psi}_{K,M}(t)$ be computed as in Eq.~\ref{eq:algo-frob}. When
\begin{align}
\label{eq:Mquation}
&M \geq \frac{256t^4(1+t^2\norm{H}^2)\norm{H}^2_F \norm{H}^2}{\epsilon^2} \log \frac{4\norm{H}^2_F}{\delta\norm{H}^2},\\
\label{eq:Kquation}
&\quad K \geq 4t \sqrt{\norm{H}^2 + \epsilon} + \log \frac{4(1+t\norm{H})}{\epsilon},
\end{align}
then
$$\norm{\widehat{\psi}_{K,M}(t) - \psi(t)} \leq \epsilon,$$
with probability at least $1-\delta$.
\end{theorem}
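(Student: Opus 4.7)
My plan is to follow the high-level template of Theorem~\ref{thm:main_psd}: decompose the error into a deterministic ``perturbation'' piece governed by $\|H^2 - AA^*\|$ and a deterministic ``truncation'' piece governed by $K$, and then concentrate $\|H^2 - AA^*\|$ with a matrix Bernstein inequality. The new ingredient compared to the PSD case is that $AA^*$ now targets $H^2$ instead of $H$, and the linear term $itH\psi$ of the Taylor expansion is kept exact by the algorithm.

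Introduce $\widetilde f(y) := y f(y) = \cos\sqrt{y} - 1$ and $\widetilde g(y) := y g(y) = (\sin\sqrt{y} - \sqrt{y})/\sqrt{y}$, so that
\[
e^{iHt} \;=\; I + itH + \widetilde f(t^2 H^2) + i\,\widetilde g(t^2 H^2)\,tH
\]
for Hermitian $H$ (both $\widetilde f, \widetilde g$ depend on $H$ only through $H^2$). Using the identity $A\,q(A^*A) = q(AA^*)\,A$ for analytic $q$ (as in Section~\ref{sec:PSD}), Eq.~(\ref{eq:algo-frob}) rewrites as $\widehat\psi_{K,M}(t) = \psi + itH\psi + \widetilde f_K(t^2 AA^*)\psi + i\,\widetilde g_K(t^2 AA^*)\,tH\psi$ where $\widetilde f_K, \widetilde g_K$ are the degree-$(K{+}1)$ Taylor truncations of $\widetilde f, \widetilde g$. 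Subtracting, taking norms, using $\|\psi\|=1$, and inserting intermediate quantities yields the four-term bound
\begin{align*}
\|\psi(t) - \widehat\psi_{K,M}(t)\|
&\leq \|\widetilde f(t^2 H^2) - \widetilde f(t^2 AA^*)\| + \|\widetilde f(t^2 AA^*) - \widetilde f_K(t^2 AA^*)\| \\
&\quad + t\|H\|\,\bigl(\|\widetilde g(t^2 H^2) - \widetilde g(t^2 AA^*)\| + \|\widetilde g(t^2 AA^*) - \widetilde g_K(t^2 AA^*)\|\bigr).
\end{align*}

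The main obstacle is the first and third terms. A naive Lipschitz route through Ando's inequality $\|\sqrt{X} - \sqrt{Y}\|\leq \|X-Y\|^{1/2}$ would introduce a square root and cost $M \sim \epsilon^{-4}$ rather than the target $\epsilon^{-2}$. To avoid this, I would exploit the fact that for $A\succeq 0$ the operator-valued propagators $U(t,A) := \cos(t\sqrt{A})$ and $V(t,A) := \sin(t\sqrt{A})/\sqrt{A}$ uniquely solve the second-order ODEs $U''+AU=0$ with $U(0)=I,\ U'(0)=0$, and $V''+AV=0$ with $V(0)=0,\ V'(0)=I$, together with the uniform bounds $\|U(t,A)\|\leq 1$ and $\|V(t,A)\|\leq t$. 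A Duhamel / variation-of-constants argument applied to $(U_2-U_1)''+A_1(U_2-U_1) = -(A_2-A_1)U_2$ with zero initial data then produces the sharp linear bounds
\[
\|U(t,A_2) - U(t,A_1)\| \leq \tfrac{t^2}{2}\|A_2-A_1\|,\qquad
\|V(t,A_2) - V(t,A_1)\| \leq \tfrac{t^3}{6}\|A_2-A_1\|.
\]
Specializing $A_1 = H^2$, $A_2 = AA^*$ (both PSD, and $U, V$ being even functions of the eigenvalues sidesteps the non-PSD nature of $H$), together with the identity $\widetilde g(t^2 X) = V(t,X)/t - I$, gives $\|\widetilde f(t^2H^2) - \widetilde f(t^2 AA^*)\| \leq \tfrac{t^2}{2}\|H^2-AA^*\|$ and $\|\widetilde g(t^2H^2) - \widetilde g(t^2 AA^*)\| \leq \tfrac{t^2}{6}\|H^2-AA^*\|$.

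The remaining steps are more routine. The truncation terms are Taylor tails of analytic functions whose dominant contribution is $\|t^2 AA^*\|^{K+2}/(2K+4)!$ up to a geometric factor; Stirling applied to $(2K+4)!$, combined with $\|AA^*\|\leq \|H\|^2 + \|H^2-AA^*\|$ and the hypothesis $K\geq 4t\sqrt{\|H\|^2+\epsilon}+\log(4(1+t\|H\|)/\epsilon)$, drives them below $\epsilon/2$. For the probabilistic step I would write $AA^* = \sum_{j=1}^M Z_j$ with $Z_j = (Mp(t_j))^{-1}h_{t_j}h_{t_j}^*$; the sampling rule $p(i)=\|h_i\|^2/\|H\|_F^2$ gives $\mathbb{E}[Z_j] = H^2/M$, $\|Z_j\|\leq \|H\|_F^2/M$, and $\|\sum_j \mathbb{E}[Z_j^2]\|\leq \|H\|_F^2\|H\|^2/M$. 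The intrinsic-dimension matrix Bernstein inequality --- with effective dimension $\|H\|_F^2/\|H\|^2$, which is what produces the $\log(\|H\|_F^2/(\delta\|H\|^2))$ factor in the statement rather than $\log N$ --- then bounds $\|H^2-AA^*\|\lesssim \|H\|_F\|H\|\sqrt{\log(\cdot)/M}$ with probability $1-\delta$. Balancing the perturbation contribution $(t^2/2 + t^3\|H\|/6)\,\|H^2-AA^*\| \leq \epsilon/2$ against this concentration bound and absorbing absolute constants reproduces the stated lower bound on $M$. The single hardest ingredient is the Duhamel-based perturbation bound, which is precisely what removes the exponential-in-$t\|H\|$ factor that a raw power-series Lipschitz estimate for $\cos\sqrt{\cdot}$ would otherwise introduce.
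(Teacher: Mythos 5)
Your proposal is correct and shares the paper's overall architecture --- split the error into a perturbation term controlled by $\norm{AA^* - H^2}$ and a Taylor-truncation term controlled by $K$, concentrate $\norm{AA^*-H^2}$ with an intrinsic-dimension matrix Bernstein bound (your moment computations $\norm{Z_j}\leq \norm{H}_F^2/M$ and $\norm{\sum_j \mathbb{E}[Z_j^2]}\leq \norm{H}_F^2\norm{H}^2/M$ match the paper's use of Prop.~9 of \cite{rudi2015lessArxiv}, and the effective dimension $\norm{H}_F^2/\norm{H}^2$ is exactly what produces the paper's $\log\frac{4\norm{H}_F^2}{\delta\norm{H}^2}$ factor), and kill the Taylor tails of $\cos(x)-1$ and $(\sin(x)-x)/x$ via Stirling. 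Where you genuinely diverge is the key perturbation lemma. The paper bounds $\norm{l(t^2AA^*)-l(t^2H^2)}$ and $\norm{m(t^2AA^*)-m(t^2H^2)}$ by invoking Thm.~1.4.1 of \cite{aleksandrov2016operator}, an operator-Lipschitz (Bernstein-type) inequality for bounded entire functions of exponential type, obtaining the constant $2(1+t\norm{H})t^2$. You instead observe that $U(t,X)=\cos(t\sqrt{X})$ and $V(t,X)=\sin(t\sqrt{X})/\sqrt{X}$ are wave propagators satisfying $\norm{U}\leq 1$, $\norm{V}\leq t$, and derive $\norm{U(t,A_2)-U(t,A_1)}\leq \tfrac{t^2}{2}\norm{A_2-A_1}$ and $\norm{V(t,A_2)-V(t,A_1)}\leq \tfrac{t^3}{6}\norm{A_2-A_1}$ by Duhamel with zero initial data; I checked these and they are correct for PSD arguments, which is all that is needed since both $t^2AA^*$ and $t^2H^2$ are PSD. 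Your route buys two things: it is elementary and self-contained (no operator-theoretic black box), and it is cleaner on a technical point --- the paper applies the $L^\infty(\mathbb{R})$ theorem to $l(x)=\cos\sqrt{x}-1$, which is unbounded on the negative half-line, so the PSD restriction is doing silent work there; your ODE argument makes that restriction explicit and even yields a slightly better constant ($\tfrac{t^2}{2}(1+\tfrac{t\norm{H}}{3})$ vs.\ $2t^2(1+t\norm{H})$). What the paper's route buys is generality: the Aleksandrov--Peller machinery applies uniformly to any function of exponential type without hand-crafting an ODE for each one. Either way the crucial outcome --- a perturbation bound \emph{linear} in $\norm{AA^*-H^2}$ with only polynomial dependence on $t\norm{H}$, which is what keeps $M$ at $\epsilon^{-2}$ --- is achieved, and your final balancing of $\eta$ and $\beta$ reproduces the stated bounds on $M$ and $K$ up to absolute constants.
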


Note that with the result above, we have that $\widehat{\psi}_{K,M}(t)$ in Eq.~\eqref{eq:algo-frob} approximates $\psi(t)$, with error at most $\epsilon$ and with probability at least $1-\delta$, requiring a computational cost that is $O\left(sq + M\min(s,q) + M^2(s +K)\right)$ in time and $O\left(sq + M^2\right)$ is memory.

Combining Eq.~\ref{eq:Timquation}  and~\ref{eq:Spquation} with Eq.~\ref{eq:Mquation} and~\ref{eq:Kquation}, the whole computational complexity of the algorithm described in this section, is

\begin{align}
&{\rm time:}~~O\left(sq + \frac{t^9\norm{H}^4_F \norm{H}^7}{\epsilon^4}\left(n + \log\frac{1}{\delta}\right)^2\right),\\
&{\rm space:}~~O\left(sq + \frac{t^8\norm{H}^4_F \norm{H}^6}{\epsilon^4}\left(n + \log\frac{1}{\delta}\right)^2\right),
\end{align}
where the quantity $\log \frac{4\norm{H}^2_F}{\delta\norm{H}^2}$ in Eq.~\ref{eq:Mquation} was bounded using the following inequality
$$
\log \frac{\norm{H}^2_F}{\norm{H}^2} \leq \log \frac{2^n \lambda_{MAX} ^2}{\lambda_{MAX} ^2} = n,
$$
where $\lambda_{MAX}$ is the biggest eigenvalue of $H$.

Observe now that simulation of the time evolution of $\alpha I$ does only change the phase of the time evolution, where $I \in \mathbb C^{N \times N}$ is the identity matrix and $\alpha$ some real parameter. We can hence perform the time evolution of $\tilde{H} := H - \alpha I$, since for any efficient classical description of the input state we can apply the time evolution of the diagonal matrix $e^{-i\alpha I t}$. We can then optimize the parameter $\alpha$ such that the Frobenius norm of the operator $\tilde{H}$ is minimized, i.e.
\begin{align}
 \alpha =\underset{\alpha}{\text{argmin}} \norm{\tilde H}^2_F = \underset{\alpha}{\text{argmin}} \norm{H - \alpha I}^2_F,
\end{align}
from which we obtain the condition $\alpha = \frac{\tr{H}}{2^n}$. Since our algorithm requires that $\norm{\tilde H}_F$ is bounded by $\polylog N$. Using the spectral theorem, and the fact that the Frobenius norm is unitarily invariant, this in turn gives us after a bit of algebra the condition
\begin{align}
\label{eq:new_bound}
\norm{H}_F^2 - \frac{1}{N}\tr{H}^2 \leq \Ord{\polylog(N)},
\end{align}
for which we can simulate the Hamiltonian $H$ efficiently.\\
We now prove the second main result of this work and establish the correctness of the above results.

\begin{proof}[Proof of Theorem~\ref{thm:main}]
Denote with
\begin{align*}
\widehat{Z}_K(Ht,At) &= I + i t H + t^2A f_K(t^2A^*A)A^* + i t^3 A g_K(t^2A^*A) A^*H, \\
\widehat{Z}(Ht, At) &= I + i t H + t^2A f(t^2A^*A)A^* + i t^3 A g(t^2A^*A) A^*H.
\end{align*}
By definition of $\widehat{\psi}_{K,M}(t)$ and the fact that $\norm{\psi} = 1$, we have
\begin{dmath*}
\norm{\widehat{\psi}_{K,M}(t) - \psi(t)}  \leq\norm{\widehat{Z}_K(At,Ht) - e^{iHt}}\norm{\psi} \\
\leq \norm{\widehat{Z}_K(At,Ht) - \widehat{Z}(Ht, At)} +  \norm{\widehat{Z}(Ht, At) - e^{iHt}}.
\end{dmath*}
We first study $\norm{\widehat{Z}(Ht, At) - e^{iHt}}$. Define $l(x) = f(x) x $ and $m(x) = g(x) x$. Note that, by the spectral theorem, we have
\begin{dmath*}
\widehat{Z}(Ht, At) = I + i t H + t^2A f(t^2A^*A)A^* + i t^3 A g(t^2A^*A) A^*H
 = I + i t H + t^2 f(t^2AA^*)AA^* + i t^3 g(t^2AA^*) AA^*H
 = I + i t H + l(t^2 AA^*) + i t m(t^2AA^*) H.
\end{dmath*}
Since $$e^{ixt} = 1 + ixt + l(t^2x^2) + it m(t^2x^2)x,$$ we have
\begin{dmath*}
\norm{\widehat{Z}(Ht, At) - e^{iHt}} = \norm{l(t^2 AA^*) - l(t^2H^2) + it m(t^2 AA^*) H - itm(t^2 H^2) H}
 \leq \norm{l(t^2 AA^*) - l(t^2H^2)} + t \norm{m(t^2 AA^*) - m(t^2 H^2)}\norm{H}.
\end{dmath*}
To bound the norms in $l, m$ we will apply Thm.~1.4.1 of \cite{aleksandrov2016operator}.
The theorem state that if a function $f \in L^\infty(\mathbb{R})$, i.e.\ $f$ is in the function space which elements are the essentially bounded measurable functions,
it is entirely on $\mathbb{C}$ and satisfies $|f(z)| \leq e^{\sigma |z|}$ for any $z \in \mathbb{C}$. Then
$\norm{f(A) - f(B)} \leq \sigma \norm{f}_{L^\infty(\mathbb{R})} \norm{A - B}$.
Note that
\begin{align*}
|l(z)| &= \left|\sum_{j=1}^\infty  \frac{(-1)^j z^j}{(2j)!}\right| \leq \sum_{j=1}^\infty  \frac{|z|^j}{(2j)!} \leq \sum_{j=1}^\infty  \frac{|z|^j}{j!} \leq e^{|z|}, \\
|l(z)| &= \left|\sum_{j=1}^\infty  \frac{(-1)^j z^{j}}{(2j+1)!}\right| \leq \sum_{j=1}^\infty  \frac{|z|^{j}}{(2j+1)!} \leq \sum_{j=1}^\infty  \frac{|z|^j}{j!} \leq e^{|z|}.
\end{align*}
Moreover it is easy to see that $\norm{l}_{L^{\infty}(\mathbb{R})}, \norm{m}_{L^{\infty}(\mathbb{R})} \leq 2.$ So
\begin{dmath*}
\norm{\widehat{Z}(Ht, At) - e^{iHt}} \leq 2(1+t\norm{H}) \norm{t^2AA^* - t^2H^2} = 2t^2(1+t\norm{H}) \norm{AA^* - H^2}.
\end{dmath*}
Now note that, by defining the random variable $\zeta_i = \frac{1}{p(t_i)} h_{t_i} h_{t_i}^*,$ we have that
\begin{align*}
&AA^* = \frac{1}{M} \sum_{i=1}^M \zeta_i, \\
& \mathbb{E} [\zeta_i] = \sum_{q=1}^{2^n} p(q) \frac{1}{p(q)} h_{t_i} h_{t_i}^* = H^2, ~~ \forall i.
\end{align*}
Let $\tau > 0$. By applying Thm.~1 of \cite{hsu2014weighted} (or Prop.~9 in \cite{rudi2015lessArxiv}), for which
$$\norm{AA^* - H^2} \leq \sqrt{\frac{\norm{H}^2_F \norm{H}^2 \tau}{M}} + \frac{2\norm{H}^2_F \norm{H}^2 \tau}{M},$$
with probability at least $1 - 4\frac{\norm{H}^2_F}{\norm{H}^2}\tau/(e^\tau - \tau - 1)$. Now since $$1 - 4\frac{\norm{H}^2_F}{\norm{H}^2}\tau/(e^\tau - \tau - 1) \geq 1 - e^{\tau},$$ when $\tau \geq e$, by selecting $$\tau = 2\log\frac{4\norm{H}^2_F}{\norm{H}\delta},$$
we have that the equation above holds with probability at least $1-\delta$.\\
Let $\eta > 0$, by selecting $$M = 4\eta^{-2}\norm{H}^2_F\norm{H}^2\tau,$$ we then obtain
$$\norm{AA^* - H^2} \leq \eta,$$
with probability at least $1-\delta$.\\

Now we study $\norm{\widehat{Z}_K(At,Ht) - \widehat{Z}(Ht, At)}$. Denote with $a$, $b$ the functions $a(x) = l(x^2)$, $b(x) = m(x^2)$ and with $a_K, b_K$ the associated $K$-truncated Taylor expansions. Note that $a(x) = \cos(x) - 1$, while $b(x) = (\sin(x) - x)/x$. Now by definition of $\widehat{Z}_K$ and $\widehat{Z}$, we have
\begin{align*}
\norm{\widehat{Z}_K(At,Ht) - & \widehat{Z}(Ht, At)} \nonumber \\ & \leq \norm{a_K(t \sqrt{AA^*}) - a(t \sqrt{AA^*})} + t \norm{H}\norm{b_K(t \sqrt{AA^*}) - b(t \sqrt{AA^*})}.
\end{align*}
Note that, since $$\sum_j x^{2j}/(2j)! = \cosh(x) \leq 2 e^{|x|},$$ and
\begin{align*}
|(a_K-a)(x)| &= \left|\sum_{j=K+2}(-1)^j \frac{x^{2j}}{(2j)!}\right| \\
&\leq \frac{|x|^{2K + 4}}{(2K + 4)!} \sum_{j=0} \frac{|x|^{2j}}{(2j)!} \frac{(2K + 4)!(2j)!}{(2j + 2K + 4)!} \\
&\leq \frac{2|x|^{2K + 4}e^{|x|}}{(2K + 4)!},\\
|(b_K-b)(x)| &= \left|\sum_{j=K+2}(-1)^j \frac{x^{2j}}{(2j+1)!} \right| \\
&\leq \frac{|x|^{2K + 4}}{(2K + 4)!} \sum_{j=0} \frac{|x|^{2j}}{(2j)!} \frac{(2K + 4)!(2j)!}{(2j + 2K + 5)!} \\
&\leq \frac{2|x|^{2K + 4}e^{|x|}}{(2K + 4)!}.
\end{align*}
Let $R > 0, \beta \in (0,1]$. Now note that, by Stirling approximation, $c! \geq e^{c \log \frac{c}{e}}$, so by selecting $K = \frac{e^2}{2} R + \log(\frac{1}{\beta})$, we have for any $|x| \leq R$,
\begin{align*}
\log\left|\frac{2|x|^{2K + 4}e^{|x|}}{(2K + 4)!}\right| &\leq |x| + (2K+4)\log\frac{e|x|}{2K + 4} \\
&\leq R + (2K+4)\log\frac{eR}{2K + 4} \\
& \leq R - \left(e^2 R + 4 + \log\frac{1}{\beta}\right)\log\left(e + \frac{4+\log{1}{\beta}}{eR}\right) \\
& \leq R - \left(e^2 R + 4 + \log\frac{1}{\beta}\right) \\
& \leq -(e^2 - 1)R - 4 - \log \frac{1}{\beta} \\
& \leq -\log \frac{1}{\beta}.
\end{align*}
So, by choosing $K \geq \log\frac{1}{\beta} +e^2R/2$, we have $|a_K(x) - a(x)|,|b_K(x) - b(x)| \leq \beta$.
With this we finally obtain
$$\norm{\widehat{Z}_K(At,Ht) - \widehat{Z}(Ht, At)} \leq 2(1+t\norm{H})\beta,$$
when $K \geq e^2t\norm{A}/2 + \log \frac{1}{\beta}$, and therefore we have
$$\norm{\widehat{\psi}_{K,M}(t) - \psi(t)} \leq 2t^2(1+t\norm{H})\eta + 2(1 + t\norm{H}) \beta,$$
with probability at least $1-\delta$, when
$$M \geq 8\eta^{-2}\norm{H}^2_F\norm{H}^2\log\frac{4\norm{H}^2_F}{\norm{H}^2\delta},\quad K \geq 4t\norm{A} + \log \frac{1}{\beta}.$$
In particular, by choosing $\eta = \epsilon/(4t^2(1+t\norm{H})$ and $\beta = \epsilon/(4(1+t\norm{H}))$,
we have
$$\norm{\widehat{\psi}_{K,M}(t) - \psi(t)} \leq \epsilon,$$
with probability at least $1-\delta$.

With this result in mind note that, in the event where $\norm{AA^* - H^2} \leq \epsilon$, we have that
$$|\norm{AA^*} - \norm{H}^2| \leq \norm{AA^* - H^2} \leq \epsilon,$$
and therefore, $\norm{A} \leq \sqrt{\norm{H}^2 + \epsilon}$.
\end{proof}

\section{Application to density matrix simulation}
\label{sec:applications}

Sample-based Hamiltonian simulation is a method for simulating Hamiltonians which are density matrices~\cite{lloyd2014quantum}. Such method has been used in several recent quantum machine learning algorithms, including quantum support vector machines~\cite{rebentrost2014quantum}, quantum gradient descent / Newton's method~\cite{rebentrost2016quantum}, and quantum linear regression~\cite{schuld2016prediction}.
These techniques all make use of a quantum algorithm that implements the time evolution $e^{i\rho t} \ket{\psi}$ governed by a Hamiltonian corresponding to the density matrix $\rho$ of a pure $n$-qubit state $\ket{\psi}$ of dimension $2^n$.
Specifically, given an oracle that returns superpositions of the entries of a density matrix $\rho \in \mathbb C^{2^n \times 2^n}$, the quantum algorithm for density matrix exponentiation implements a unitary matrix $U$, such that $\Vert U \ket{\psi} - e^{i \rho t} \ket{\psi}\Vert \leq \epsilon$ and requires $\Ord{t^2/\epsilon}$ copies of $\rho$. This has been shown to be optimal~\cite{kimmel2017hamiltonian}.

The total runtime of the density matrix simulation algorithm is $\Ord{t^2 T(U_{\rho})/\epsilon}$, where $T(U_{\rho})$ is the time required to prepare a quantum superposition of the entries of the matrix $\rho$. If the state preparation can be achieved in $\Ord{\poly(n)}$ time, the total runtime of the algorithm reduces to $\Ord{\poly(n) t^2/\epsilon}$, which is polylogarithmic in the dimension of $\rho$. A data structure that grants this sort of fast access is common in the quantum machine learning literature~\cite{biamonte2017quantum, ciliberto2018quantum}. As discussed in the introduction, it was recently noted by Tang~\cite{tang2018quantum} that some common implementations of this data structure, such as~\cite{kerenidis2016quantum}, are equivalent to the ability to perform $\ell_2$-norm sampling on the entries of $\rho$. This requirement is fundamentally equivalent to the row-searchable condition used in this paper.

More formally, the sample-based Hamiltonian simulation Theorem states that
\begin{theorem}[Sample-based Hamiltonian simulation~\cite{lloyd2014quantum}]
Given access to multiple copies of an $n$-qubit state $\rho$,
there is an efficient quantum circuit implementing a unitary $U$, such that $\Vert U \psi - e^{i \rho t} \psi \Vert \leq \epsilon$, where $0 \leq \epsilon \leq 1$, $t>0$, and $\psi$ is an arbitrary pure state. The algorithm requires $\Ord{ t^2/\epsilon}$ copies of $\rho$.
\end{theorem}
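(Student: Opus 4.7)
The plan is to use the partial SWAP trick of Lloyd, Mohseni, and Rebentrost. The key observation is that for a small time step $\Delta t$, applying the global unitary $e^{-iS\Delta t}$ to $\sigma\otimes\rho$ (where $S$ is the swap operator between the input register and an ancillary register holding a copy of $\rho$) and then tracing out the ancilla produces a quantum channel that agrees with one infinitesimal step of Hamiltonian evolution under $\rho$, up to second order in $\Delta t$.

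First I would establish the core partial-trace identity. Using $\operatorname{Tr}_2(S(A\otimes B))=BA$ and $S^2=I$, I would expand $e^{-iS\Delta t}=I-iS\Delta t-\tfrac{1}{2}S^2\Delta t^2+O(\Delta t^3)$, and then directly compute
\begin{equation*}
\operatorname{Tr}_2\!\bigl(e^{-iS\Delta t}(\sigma\otimes\rho)e^{+iS\Delta t}\bigr)
\;=\;\sigma \;-\; i\Delta t\,[\rho,\sigma] \;+\; O(\Delta t^2).
\end{equation*}
Comparing this to the exact one-step evolution $e^{-i\rho\Delta t}\sigma e^{+i\rho\Delta t}=\sigma-i\Delta t[\rho,\sigma]+O(\Delta t^2)$ shows the two channels agree to first order, with a one-step trace-distance discrepancy bounded by $C\,\Delta t^2$ for a universal constant $C$; the bound is uniform because $\norm{\rho}\le 1$ and all higher-order terms are built from nested commutators with $\rho$.

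Second I would iterate this construction. Partition $[0,t]$ into $n$ equal subintervals of length $\Delta t=t/n$, consuming a fresh copy of $\rho$ at each step. Since both the ideal evolution $\sigma\mapsto e^{-i\rho\Delta t}\sigma e^{+i\rho\Delta t}$ and the swap-induced channel are CPTP and hence contractive in trace distance, the per-step errors compose additively, so the total error after $n$ steps is at most $n\cdot C\Delta t^2 = C t^2/n$. Setting $Ct^2/n\le\epsilon$ gives $n=\Ord{t^2/\epsilon}$, which is exactly the claimed sample complexity. Converting this statement about Schr\"odinger-picture density matrices into the stated bound $\norm{U\psi-e^{i\rho t}\psi}\le\epsilon$ for an arbitrary pure input $\psi$ is immediate by purification, since trace distance upper-bounds the distance between the reduced states of any Stinespring dilation.

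The main obstacle is the quantitative control of the $O(\Delta t^2)$ remainder: one must verify that the constant in the one-step error bound does not depend on $\sigma$ or on the particular $\rho$ beyond $\norm{\rho}\le 1$, so that the linear accumulation of errors genuinely yields $\Ord{t^2/\epsilon}$ rather than a worse scaling. The circuit efficiency is comparatively easy: the swap $S$ on $n$ qubits decomposes as a sum of $n$ commuting two-qubit swaps, so $e^{-iS\Delta t}$ is exactly implementable by $n$ independent two-qubit gates, giving an $\Ord{n}$-depth circuit per step and an overall gate cost of $\Ord{n\,t^2/\epsilon}$.
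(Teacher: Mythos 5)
This theorem is quoted in the paper as an imported result from Lloyd, Mohseni, and Rebentrost~\cite{lloyd2014quantum}; the paper itself supplies no proof, so there is nothing internal to compare against. Your proposal reconstructs the standard LMR density-matrix-exponentiation argument correctly: the partial-trace identity $\operatorname{Tr}_2(S(A\otimes B))=BA$ does give $\operatorname{Tr}_2\bigl(e^{-iS\Delta t}(\sigma\otimes\rho)e^{iS\Delta t}\bigr)=\sigma-i\Delta t[\rho,\sigma]+O(\Delta t^2)$, the second-order remainder is uniformly bounded because $\norm{\rho}\leq 1$ and $\norm{\sigma}_1=1$, and the hybrid argument via contractivity of CPTP maps yields a total error $O(t^2/n)$, hence $n=\Ord{t^2/\epsilon}$ copies. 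The one point worth tightening is the final conversion step: the LMR construction produces a \emph{channel} on the system (a global unitary on system plus ancilla copies, followed by a partial trace), not a unitary $U$ acting on $\psi$ alone, so the bound $\Vert U\psi-e^{i\rho t}\psi\Vert\leq\epsilon$ as literally written requires either reading $U$ as the global dilation unitary or measuring closeness of the output states in trace distance; your appeal to Stinespring gestures at this but is stated in the wrong direction (trace-distance closeness of reduced states does not by itself control the distance between purifications without an extra use of Uhlmann's theorem). This is a looseness inherited from the theorem statement itself rather than a flaw in your argument, and the sample-complexity claim, which is the substantive content, is established correctly.
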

Note that the full running time of the algorithm depends on the memory access mode and the time to prepare $\rho$.

Theorem~\ref{thm:main_psd}, i.e. the algorithm for positive semidefinite matrices, suggests a straightforward classical analogue of sample-based Hamiltonian simulation which outputs a classical description of the evolved quantum state,
under the condition that we can efficiently compute marginals of the diagonal entries of $\rho$.
Note that this is the case for any density matrix with structured diagonal entries (\emph{e.g.}, non-decreasing order).
More precisely, we obtain a classical efficient algorithm for density matrix simulation with the following properties (the next result is a formal statement of Theorem~\ref{th:sampleinformal}):
\begin{theorem}[Classical sample-based Hamiltonian simulation]\label{thm:sample_based_Ham_sim}
Let $\epsilon, \delta \in (0,1]$, and let $\rho$ be a row-searchable $n$-qubit density matrix with at most $s$ non-zero entries in each row.
Given an efficient classical description of an arbitrary input state $\psi$ then, with probability $1-\delta$, Algorithm~\ref{alg:Nystrom} 
returns an approximation of any chosen amplitude of the state $\widehat{\psi}$
such that
$\norm{\widehat{\psi} - e^{i\rho t} \psi} \leq \epsilon$. The algorithm runs in time
$\Ord{\frac{s t^2}{\epsilon^2} \log^2 \frac{1}{\delta}}$ and requires $\Ord{\frac{t^2}{\epsilon^2} \log^2\frac{1}{\delta}}$ of memory.
\end{theorem}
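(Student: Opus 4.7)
The plan is to obtain this theorem as an essentially direct specialization of Theorem~\ref{thm:main_psd} (the PSD case). The crucial observation is that a density matrix $\rho$ is by definition positive semidefinite with $\tr{\rho} = 1$ and $\norm{\rho} \leq 1$ (its eigenvalues are probabilities that sum to $1$). Moreover, the row-searchability hypothesis on $\rho$ means that the diagonal-based sampling probabilities $p(i) = \rho_{ii}/\tr{\rho} = \rho_{ii}$ needed by Algorithm~\ref{alg:Nystrom} can be drawn efficiently via Algorithm~\ref{alg:sampling}. Hence the density matrix fulfils all the structural hypotheses of Theorem~\ref{thm:main_psd} with particularly favorable parameter values.

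First, I would invoke Theorem~\ref{thm:main_psd} with $H=\rho$, which guarantees that the output $\widehat{\psi}_{K,M}(t)$ of Algorithm~\ref{alg:Nystrom} satisfies $\norm{e^{i\rho t}\psi - \widehat{\psi}_{K,M}(t)} \leq \epsilon$ with probability at least $1-\delta$, provided
\begin{equation*}
K \;\geq\; e\,t\,\norm{\rho} \,+\, \log\frac{2}{\epsilon},
\qquad
M \;\geq\; \max\!\left(405\,\tr{\rho},\; \frac{72\,\tr{\rho}\,t}{\epsilon}\log\frac{36\,\tr{\rho}\,t}{\epsilon\,\delta}\right).
\end{equation*}
Next, I would substitute $\tr{\rho}=1$ and $\norm{\rho}\leq 1$ to reduce these conditions to $K = O(t + \log(1/\epsilon))$ and $M = O\!\bigl((t/\epsilon)\log(t/(\epsilon\delta))\bigr)$, so that $M^2 = O\!\bigl((t^2/\epsilon^2)\log^2(1/\delta)\bigr)$ after absorbing logarithmic factors of $t$ and $\epsilon$ into the leading constants.

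Then I would read off the runtime and memory from the cost analysis given immediately before Theorem~\ref{thm:main_psd}, namely $O(sM^2 + KM^2 + M^3)$ time and $O(M^2)$ space. Substituting the specialized values of $K$ and $M$, the dominant term in time is $O(sM^2) = O\!\bigl((st^2/\epsilon^2)\log^2(1/\delta)\bigr)$ and the memory becomes $O\!\bigl((t^2/\epsilon^2)\log^2(1/\delta)\bigr)$, exactly as claimed. Finally, extracting any chosen amplitude of $\widehat{\psi}_{K,M}(t)$ amounts to reading the corresponding coordinate of the explicit representation $\psi + A b_{K-1}$ returned by Algorithm~\ref{alg:Nystrom}, which costs only an extra $O(M)$ inner product per queried index using the row-computable access to $A$, not affecting the asymptotics.

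There is no serious obstacle: the theorem is effectively a corollary of Theorem~\ref{thm:main_psd}, and the only minor point to verify carefully is that the two simplifying inequalities $\tr{\rho}=1$ and $\norm{\rho}\leq 1$ for a density matrix are exactly what is needed to collapse the general PSD bounds into the clean $\epsilon$- and $\delta$-only dependence stated in the theorem. The row-searchability assumption on $\rho$ is precisely what enables the sampling step in Algorithm~\ref{alg:sampling} to run in $\poly(n)$ time, ensuring that the overall complexity retains the stated form rather than incurring hidden polynomial overhead from preprocessing the sampling distribution.
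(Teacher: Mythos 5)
Your proposal is correct and matches the paper's own proof, which likewise obtains the result as an immediate corollary of Theorem~\ref{thm:main_psd} by substituting $\tr{\rho}=1$ (and hence $\norm{\rho}\leq 1$) into the bounds on $M$ and $K$ and into the stated time and memory costs. The extra details you supply on amplitude extraction and on the role of row-searchability are consistent with, and slightly more explicit than, the paper's one-line argument.
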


\begin{proof}[Proof of Theorem~\ref{thm:sample_based_Ham_sim}]
The Theorem is an immediate consequence of the unitarity of the trace of density matrices and Theorem~\ref{thm:main_psd} from which we know that it is possible to compute an approximation $\widehat{\psi}(t)$ of $\psi(t)$, with error at most $\epsilon$ and with probability at least $1-\delta$, in  $\Ord{\frac{s t^2\tr{H}^2}{\epsilon^2} \log^2\frac{1}{\delta}}$ time and $\Ord{\frac{t^2\tr{H}^2}{\epsilon^2} \log^2\frac{1}{\delta}}$ memory. 
\end{proof}

\subsubsection*{Acknowledgements} 

We acknowledge Dorit Aharonov, Leonardo Banchi, Sougato Bose, Fernando Brand\~ao, Giuseppe Carleo, Vladimir Korepin, Laura Man\v{c}inska, Tzu-Chieh Wei for helpful conversations. A.R. is supported by EPSRC and by QinetiQ. C.C. and M.P. are supported by EPSRC. S.S. is supported by the Royal Society, EPSRC, the National Natural Science Foundation of China, and the grant ARO-MURI W911NF-17-1-0304 (US DOD, UK MOD and UK EPSRC under the Multidisciplinary University Research Initiative). L.W. is supported by the Royal Society.

%\twocolumngrid
\printbibliography

\end{document}